\title{How to sell a service with uncertain outcomes}
\author{Krishnamurthy Iyer, Alec Sun, Haifeng Xu, You Zu}
\begin{document}

\maketitle
\begin{abstract}
    Motivated by the recent popularity of machine learning training services, we introduce a contract design problem in which a provider sells a service that results in an outcome of uncertain quality for the buyer. The seller has a set of actions that lead to different distributions over outcomes. We focus on a setting in which the seller has the ability to commit to an action and the buyer is free to accept or reject the outcome after seeing its realized quality. We propose a two-stage payment scheme where the seller designs a menu of contracts, each of which specifies an action, an upfront price and a vector of outcome-dependent usage prices. Upon selecting a contract, the buyer pays the upfront price, and after observing the realized outcome, the buyer either accepts and pays the corresponding usage price, or rejects and is exempt from further payment. We show that this two-stage payment structure is necessary to maximize profit: only upfront price or only usage prices is insufficient. We then study the computational complexity of computing a profit-maximizing menu. While computing the optimal profit is \tbf{NP}-hard even for two buyer types, we derive a fully-polynomial time approximation scheme (FPTAS) for the optimal profit for a constant number of buyer types. Finally, we prove that in the single-parameter setting in which buyers' valuations are parametrized by a single real number, there exists a \emph{revenue-optimal} menu consisting of a single contract.

% Motivated by the recent popularity of machine learning training services, we introduce a contract design problem in which a provider sells a service that results in an outcome of uncertain quality for the buyer. The seller has a set of actions that lead to different distributions over outcomes. We focus on a setting in which the seller has the ability to commit to an action and the buyer is free to accept or reject the outcome after seeing its realized quality. We propose a two-stage payment scheme where the seller designs a menu of contracts, each of which specifies an action, an upfront price and a vector of outcome-dependent usage prices. Upon selecting a contract, the buyer pays the upfront price, and after observing the realized outcome, the buyer either accepts and pays the corresponding usage price, or rejects and is exempt from further payment. We show that this two-stage payment structure is necessary to maximize profit: only upfront price or only usage prices is insufficient. We then study the computational complexity of computing a profit-maximizing menu. While computing the optimal profit is NP-hard even for two buyer types, we derive a fully-polynomial time approximation scheme (FPTAS) for the optimal profit for a constant number of buyer types. Finally, we prove that in the single-parameter setting in which buyers' valuations are parametrized by a single real number, there exists a revenue-optimal menu consisting of a single contract.
\end{abstract}
\newpage
\tableofcontents
\newpage

\section{Introduction} \label{introduction}

Motivated by the surge in companies offering machine learning services, we study how to price such a service through the lens of \emph{algorithmic contract theory} \citep{dutting2024algorithmic}. Our problem setting is motivated by AutoML services offered by Google Vertex AI and Amazon SageMaker as well as enterprise large language models (LLMs) sold by \citet{openai-enterprise}. AutoML services help users automatically train their models, and the service price is based on the amount of cloud computing resources users consume. Selling fine-tuned LLMs to businesses, such as the LLM enterprise product of \citet{openai-enterprise}, has also recently become a profitable industry, with custom models costing millions of dollars. The above settings share two characteristics. First, they require the service provider to exert costly effort, whether in the form of cloud computing resources for AutoML or engineering and computing effort for fine-tuning LLMs. Second, the outcome has high uncertainty since the performance of AutoML and fine-tuned LLMs is very problem dependent and difficult to predict in advance.

To capture these characteristics, we model pricing a training service as a contract design problem between a service provider (seller) and a customer (buyer). The service consists of the seller choosing one of finitely many possible actions, each of which incurs a cost for the seller and leads to a known distribution over possible \emph{outcomes}. In our motivating applications, the action can be interpreted as the effort level that the seller undertakes to train a machine learning model. Each effort level has a different training cost for the seller. The outcome can be understood as the model quality, which is uncertain given the stochastic nature of training. The buyer is one of several possible buyer types and has a type-dependent value for every outcome. The seller seeks to maximize their \emph{profit}, which is the revenue collected from the buyer minus the expected cost of performing the service.

% (The distinction between revenue and profit is only important in \cref{single-parameter}. All other results in the paper hold for both maximizing revenue and maximizing profit.)

To sell the service, the seller presents a \emph{menu} of contracts to the buyer, where each contract has a two-part tariff structure. The two-part tariff has been widely studied in economics as a means to extract higher revenue through better price discrimination \citep{hayes1987competition, schlereth2010optimization, leland1976monopoly, armstrong2011competitive, murphy_price_1977, danaher_optimal_2002}. Formally, each \emph{contract} specifies an action, an \emph{upfront} price, and an outcome-dependent \emph{usage} price. After selecting a contract, the seller implements a two-part payment mechanism. The seller commits to the contract's action and charges the buyer an upfront price for performing the action. Upon the action's completion, the buyer is able to observe the outcome and decide whether to accept or reject it. If the buyer accepts the outcome, the seller charges the buyer a further usage price, whose value depends on the outcome. If the buyer rejects the outcome, they cannot use the product but are exempt from further payment.

We refer to the buyer's freedom to accept or reject the outcome and the seller's ability to control the buyer's access to the product as the \emph{voluntary usage} assumption. This is in contrast to \emph{mandatory usage} in which the buyer is required to purchase the product \citep{bernasconi2024agent}. A natural question is how in practice the seller can prevent the buyer from using the product if they reject the outcome. In applications of our interest, the outcome is the quality of the final model, which can be observed through performance tests. AutoML platforms like VertexAI charge buyers for model training costs (upfront price), then they deploy the trained model on their own platform and charge buyers for each call of the trained model (usage price). In LLM fine-tuning, platforms like OpenAI release the trained model as a black box and hide the model weights, so the buyer is required to keep interacting with the seller through prompts to use the model, paying per token. In both examples the service provider controls the buyer's access to the model.

% In the applications that serve as our main motivation, AutoML products and LLM fine-tuning, the service in question can be seen as a machine learning training or fine-tuning service. 

% Currently, service providers such as Vertex AI and SageMaker charge a uniform price based on the amount of computing usage. Notably, they do not distinguish between different buyer types and also do not consider the final model quality. In this paper, we introduce a new 
%  In this paper, we study how the seller can optimally price their service to maximize profit from the buyer.

% To solve this issue, we introduce a contract-based pricing scheme inspired by the two-part tariff. Our mechanism accounts for \emph{adverse selection}, which refers to the fact that the buyer type is unknown to the seller, and it also allows the buyer to pay different amounts according to the outcome quality.

\paragraph{Contributions.} Our central research question is the following: how good is our pricing scheme at maximizing seller profit compared to other models, and how hard is it to compute profit-maximizing contracts in our model? Our paper makes the following main contributions:
\begin{itemize}
    
    \item \textbf{Necessity of two-part tariff structure.} 
    We show that implementing a two-part tariff in the service provider problem yields significantly greater seller profit compared to using only upfront payments (\cref{upfront-payment-only}) or using only usage payments (\cref{usage-payment-only}), which justifies the inclusion of both types of payments in our mechanism. We provide a tight bound on the worst-case multiplicative gap in the seller profit achievable by our two-part tariff structure versus one that only uses upfront payments.
    
    \item \textbf{Superiority of voluntary usage.} We prove that maximum seller profit is \emph{always} weakly higher when the seller allows the buyer the freedom to accept or reject the outcome than when the seller forces the buyer to accept and pay for every outcome (\cref{voluntary-usage-subsumes}).
    
    \item \textbf{Complexity of computing profit-maximizing menus.}  We show that maximum profit can be achieved by offering just two distinct usage prices: 0 and $\infty$ (\cref{two-usage-prices-suffice}). Using this reduction, we prove that computing the exact maximum seller profit in the service provider problem is \tbf{NP}-hard even when there are only two buyer types and a single seller action (\cref{np-hardness-two-types}). Despite this hardness result, we use a dynamic program framework \citep{woeginger2000} to derive a fully polynomial time approximation scheme (FPTAS) for \emph{approximating} the maximum seller profit when the number of buyer types $T$ is constant and the seller's profit margin is at least a positive constant (\cref{fptas}). 
    
    \item \tbf{Revenue-optimality in single-parameter settings.} Even though the general service provider problem is \tbf{NP}-hard, when buyers' valuations are parametrized by a single real number we show that not only can we efficiently compute a \emph{revenue}-maximizing menu, but also that revenue can be maximized by a menu consisting of a single contract (\cref{single-parameter-revenue}).
\end{itemize}

Taken together, our results provide comprehensive insights into the structure of optimal contracts to sell services with uncertain outcomes.
\section{Related work} \label{connections}

In standard Bayesian contract design problems \citep{alon2023bayesian,guruganesh2021contracts, castiglioni2024reduction}, the principal issues a contract and an agent is paid to perform actions that benefit the principal. However, in our model the principal (provider) both issues contracts and performs actions that benefit the agent (customer). Our service provider problem has close connections to several problems in the mechanism design literature.

% Our model of selling actions using contracts is most related to recent work of \citet{bernasconi2024agent} on selling \emph{hidden} actions, however there are key differences between their model and ours which we highlight in \cref{connections}.

\paragraph{Selling hidden actions.}

\citet{bernasconi2024agent} study a related problem of selling a service modeled by a hidden action. Though both models are variants of principal-agent problems in which the seller performs the action, there are two fundamental differences between their setting and ours. First, we assume that the seller action is not hidden from the buyer  but rather can be committed to. We argue that this absence of \emph{moral hazard} is a natural assumption in our service provider problem. From a practical perspective, automated machine learning (AutoML) platforms such as Vertex AI and SageMaker are large-scale and backed by highly regulated parent companies and thus can commit to performing the services they offer. From a theoretical perspective, it is well-known that commitment leads to higher leader utility compared to no commitment in leader-follower games \citep{von2010leadership}, hence there is a clear economic incentive for the seller (leader) in our contract design setting to be able to commit to actions. Second, we do not require the buyer to purchase the end product and instead give the buyer the option to \emph{reject} the outcome, in which case they do not receive the product but are also not required to pay for it. We show that under this \emph{voluntary usage} assumption, seller profit is, perhaps surprisingly, always weakly higher than \emph{mandatory usage}, which is when the seller forces the buyer to accept and pay for every outcome.

\paragraph{Selling lotteries.}

Since a key feature of our model is that the service has uncertain outcomes, it is naturally related to the well-studied problem of selling lotteries, for example see \citet{chen2015complexity}. A \emph{lottery} draws an \emph{item}, which is analogous to the \emph{outcome} in our model, from a set $Q$, with different items having different probabilities of being drawn. A buyer's value for a lottery is the expected value of their valuation for the item that the lottery draws.
% \kidelete{A \emph{menu} in the lottery pricing problem consists of, for each type $t$, a tuple $(\mathbf{p}^t, w^t)$ for each type $t$, where $w^t$ is the lottery price and $p^t_q$ is the probability of receiving item $q$ if the lottery is purchased.}
The fundamental difference between lottery pricing and our pricing problem is that a lottery seller has the freedom to design arbitrary lottery distributions, whereas in our model only those distributions that are achievable by provider actions are available. Nonetheless, the lottery pricing problem can be viewed as a special case of our service provider problem where there are infinitely many actions that induce all of the possible outcome distributions. The upfront payment in our model corresponds to the lottery price. We prove in \cref{usage-payment-lottery-pricing} that in lotteries, the maximum seller revenue remains the same with or without usage payments, so the two-part tariff structure in lottery pricing is not needed. In contrast, usage payments in our service provider problem are crucial, as we show in \cref{usage-payment-necessary}.

\paragraph{Selling products of differing qualities.}

\citet{mussa1978monopoly} and \citet{maskin1984monopoly} initiated the study of pricing products of differing qualities. Their model can be thought of as an instance of our service provider problem where each action deterministically maps to a unique quality and hence actions and qualities are interchangeable. Our \emph{outcomes} correspond to the \emph{qualities} in \citep{mussa1978monopoly} and our action costs are their production costs. While \citet{mussa1978monopoly} and \citet{maskin1984monopoly} assume continuous qualities and cost functions, our focus in this work is on analyzing the computational complexity of the service provider problem with discrete model primitives and where actions lead to uncertain outcomes.

\paragraph{Multidimensional screening.}

Another axis on which our work diverges from \citep{mussa1978monopoly} and \citep{maskin1984monopoly} is in the heterogeneity of consumer preferences. In \citep{mussa1978monopoly} and \citep{maskin1984monopoly}, it is natural to think of consumers' preferences as one-dimensional, ordered by the willingness to pay for higher quality products. However, in our discrete model buyer types can rank outcomes arbitrarily and valuations are not parametrized by a single number. Hence our work is more closely related to the \emph{multidimensional screening} literature \citep{rochet2003economics} in which contracts can be structured very generally based on quality and type. We show in \cref{two-usage-prices-suffice} that in our two-stage payment model, the general multidimensional screening problem reduces to a simple form where in each contract, the seller sets usage payments for some outcomes to 0 and others to $\infty$. This is equivalent to essentially blocking buyer types from purchasing certain qualities and allowing free usage for the rest after upfront payment.
\section{Model} \label{model}

We now formally describe our model for selling a service. A problem instance consists of a tuple $(A, Q, [T])$, where $A$ is a finite set of actions, $Q$ is a finite set of outcomes, and $[T]$ is a finite set of $T$ buyer types. Let $\Delta_Q= \{ \mathbf{p} \in [0,1]^Q: \sum_{q\in Q} p_q = 1 \}$ be the simplex supported on $Q$. Each action $a\in A$ incurs cost $c(a)$ for the seller and leads to a distribution $\mathbf{p}^a \in \Delta_Q$ over outcomes, with $p^a_q$ denoting the probability with which outcome $q$ is realized. This stochastic outcome distribution is introduced to capture the highly uncertain performance outcomes when providers sell training services to downstream users \citep{sun2023automl}.

The buyer type $t \in [T]$ is drawn from a distribution $\mu = (\mu^1,\lds,\mu^T) \in \Delta_{[T]}$ known to the seller. Without loss of generality, we assume that $\mu^t > 0$ for all $t\in [T]$. Each type $t$ is associated with a \emph{valuation vector} $\mathbf{v}^t \in \bR^Q$ such that $v^t_q$ is type $t$'s value for outcome $q$. As is common in contract design, we assume that the buyer type does not affect the action-to-outcome transition probabilities \citep{dutting2024algorithmic, guruganesh2021contracts, guruganesh2023menus, zuo2024new}.

\paragraph{Menus of contracts.}

We define a \emph{menu} $\cM$ as a set of contracts, where each contract $\cC = (a, w, \mathbf{x}) \in \cM$ specifies an action $a \in A$, an upfront price $w \in \bR_{\ge 0}$, and a vector of usage prices $\mathbf{x} \in \bR^Q_{\geq 0}$ where $x_q$ is the usage price for outcome $q \in Q$.

\paragraph{Seller-buyer interaction protocol.}

The interaction between the seller and buyer goes through the following steps:

\begin{enumerate}
    \item The seller commits to a menu $\cM $ of contracts and presents it to the buyer.
    
    \item A buyer with type $t$ drawn from the type distribution $\mu$ arrives and selects a contract $\cC = (a, w, \mbf{x}) \in \cM$.
    
    \item The buyer pays the seller the upfront price $w$ to enter into the contract, and the seller performs the action $a$ that they committed to.
    
    \item An outcome $q\sim \mathbf{p}^a$ is realized and observed by the buyer.
    
    \item After observing the outcome, the buyer can either choose to accept the outcome at price $x_q$ or reject it and pay nothing. 
\end{enumerate}

We refer to the assumption that a buyer is free to accept or reject the outcome as the \emph{voluntary usage} assumption. This is in contrast to the \emph{mandatory usage} assumption, where the buyer must accept the realized outcome $q$ at the price $x_q$. Under voluntary usage, a buyer of type $t$ will accept the realized outcome $q$ if and only if their value for the outcome is at least the corresponding usage price: $v_q^t \geq x_q$. Here we assume, as is common in  principal-agent problems, that the buyer breaks ties in favor of the seller. The expected utility of a buyer with type $t$ that selects a contract $\cC = (a, w, x) \in \cM$ is thus given by
$$U(t; \cC) \coloneq \sum_{q\in Q}  p^{a}_q \cdot  \max\{v^t_q - x_q, 0\} - w.$$ 

\paragraph{Incentive compatibility and direct menus.} \label{IC}

We can without loss of generality assume that the size of the menu $\cM$ is at most $T$ because each buyer type will select only the contract that yields the highest expected utility for them. By relabeling the utility-maximizing contract for type $t$ in the menu as $\cC^t$ and allowing for duplicate contracts if multiple types select the same contract, we can equivalently define a \emph{direct} menu as a tuple of contracts $\cM = \bp{\cC^t}_{t\in [T]}$, one for every type. The term \emph{direct} refers to indexing a contract by the type that selects it. This reduction from (indirect) menus to direct menus using the revelation principle is standard in contract design \citep{castiglioni2021bayesian}. 

Given a direct menu $\cM = (\cC^t)_{t \in [T]}$, we sometimes refer to $\cC^t$ as the contract $t$. For notational brevity, we sometimes denote the buyer's utility $U(t; \cC^{t'})$ for contract $\cC^{t'}$ as $U(t; t')$. We further let $U(t)$ denote the utility $U(t;t)$ a buyer of type $t$ derives from contract $\cC^t$. Under this notation, the preceding discussion implies that it suffices to consider menus that satisfy the following:

\begin{definition}[Incentive compatibility (IC)]
A menu $\cM = \bp{\cC^t}_{t\in [T]}$  is called \emph{incentive-compatible (IC)} if $U(t)  \geq U(t;t')$ for all $t,t'\in [T]$.
\end{definition}

The IC constraints imply that it is optimal for a type $t$ buyer to choose contract $\cC^t$ since no other contract $\cC^{t'}$ will yield a higher utility for them. 

\paragraph{Individual rationality.}

Finally, we assume that the buyer is always allowed to not select any contract and opt-out of the mechanism entirely. Thus, any contract selected by a buyer must yield nonnegative buyer surplus, a constraint commonly known as \emph{individual rationality} (IR).

\begin{definition}[Individual rationality (IR)]
A menu $\cM = \bp{\cC^t}_{t\in T}$ is called \emph{individually rational (IR)} if $U(t)  \geq 0$ for all buyer types $t$.
\end{definition}

\paragraph{The service provider optimization problem.}

Given an IC and IR menu $\cM = \bp{\cC^t}_{t \in [T]}$, the seller's expected profit is given by 

\begin{align*}
    \pi(\cM) \coloneq \bE_{t \sim \mu} \left[ w^t  - c(a^t) + \sum_{q\in Q} p^{a^t}_q x^t_q \cdot \one\bb{v^t_q \ge x^t_q} \right]
\end{align*}

The seller's goal is to design a menu that maximizes their expected profit, which amounts to solving the following optimization problem:

\newcommand{\Ropt}{\Pi_{\tt{opt}}}

\begin{tcolorbox}[title=Maximizing profit of a direct menu]
\vspace{-1em}
\begin{align}
    \Ropt := \max_{\cM = \bp{\cC^t}_{t \in [T]}} & \pi(\cM) \label{profit-maximization-direct-menu} \\
         U(t) &\geq U(t; t') & & \fl t, t' \in [T]  \notag \\ 
         U(t) &\geq 0  && \fl t \in [T] \notag
\end{align}
\end{tcolorbox}

Henceforth, for brevity we refer to expected profit as simply \emph{profit}.
\section{Necessity of two-part tariffs and voluntary usage} \label{two-payments}

The seller-buyer interaction protocol described in \cref{model} is more complex than the protocols for selling hidden actions \citep{bernasconi2024agent} and selling lotteries \citep{chen2015complexity} because we implement a \emph{two-part tariff}: a pricing scheme with two payment stages. For the information structure in our model in which the seller performs an action and then reveals the outcome's quality, the two-part tariff is the most general contract form because there are only two distinct information sets for the buyer: the first where they know nothing about the quality beyond the prior, and the second where they know the quality exactly.

% Multiple contract payments within the same information set can be combined. \footnote{If we allow randomness in the choice of action, or if we allow the seller to reveal partial information about the outcome's quality, the two-part tariff contract form can be generalized. (Randomized actions can also be viewed as enlarging the action space to infinite size.)}

While the two-part tariff is a widely studied mechanism \citep{hayes1987competition}, it is unclear whether two-stage payment is {\em necessary} to achieve optimal seller profit in our setting. To elaborate, while it is trivial that the two-stage payment achieves \emph{weakly} higher seller profit than a single stage alone, it is unclear whether one can \emph{strictly} improve seller profit. It is in this sense that we use the term \emph{necessary}. We begin this section by showing that indeed both stages of payment are necessary to maximize profit by quantifying the multiplicative loss associated with single-stage payments (\cref{upfront-payment-only} and \cref{usage-payment-only}). This is in contrast with both the setting with the mandatory usage assumption (\cref{voluntary-usage-subsumes}) as well as the closely related lottery selling problem (\cref{usage-payment-lottery-pricing} in \cref{a:usage-payment-lottery-pricing}), where we prove that there is no loss in restricting to single-stage payments. Finally, we leverage \cref{upfront-payment-only} to show that the seller profit under voluntary usage is always higher than the profit under mandatory usage (\cref{voluntary-usage-subsumes}).

\subsection{Usage payments are necessary} \label{usage-payment-necessary}

\newcommand{\Rupfront}{\Pi_{\tt{upfront}}}

To prove that usage payments are needed to achieve maximum seller profit, we consider a restricted version of the service provider problem where all usage prices are set to 0. Recall that $\Ropt$ denotes the maximum seller profit achievable through the two-part tariff. Let $\Rupfront$ denote the maximum profit achievable through upfront payments only. To state our results, we first make the following definition:
\begin{definition}
   For a buyer type distribution $\mu \in \Delta_{[T]}$, define  
    \begin{equation}\label{eq:H-mu}
    H_\mu \coloneq \sum_{t\in [T]} \fr{\mu^{\sigma(t)}}{\sum_{i=1}^t \mu^{\sigma(i)}}
    \end{equation}
    where $\sigma : [T] \to [T]$ is a permutation satisfying $0 < \mu^{\sigma(1)} \le \cds \leq \mu^{\sigma(T)} \leq 1$. 
\end{definition}

Observe that if $\mu\in \Delta_{[T]}$ is the uniform buyer type distribution in which $\mu^t = \fr{1}{T},\fl t\in [T]$, then $H_\mu = \sum_{t\in [T]} \fr{1}{t} = H_T$, the $T$-th harmonic number. In \cref{a:H_mu} we show that $H_\mu\in [H_T, T)$ for all $\mu\in \Delta_{[T]}$. With this definition in place, we have the following result:
\begin{proposition} \label{upfront-payment-only}
     Let $\mu \in \Delta_{[T]}$ denote the buyer type distribution. Then we have the following:
    \begin{enumerate}
        \item There exists a problem instance for which $\fr{\Ropt}{\Rupfront} = H_\mu$.
        \item For all problem instances, $\fr{\Ropt}{\Rupfront} \le H_\mu$.
    \end{enumerate}
\end{proposition}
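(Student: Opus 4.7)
The plan is to establish the two parts separately. For Part 2, the upper bound, I would first invoke \cref{two-usage-prices-suffice} to restrict, without loss of generality, to two-part tariff menus whose usage prices lie in $\{0, \infty\}$. Under this restriction, each contract reduces to a triple $(a^t, w^t, S^t)$ where $S^t \subseteq Q$ is the set of accepted outcomes, the per-type profit simplifies to $\pi^t = w^t - c(a^t)$, and $R = \sum_t \mu^t \pi^t$. The crux of the proof is the following subclaim: for each rank $k \in [T]$, there exists an upfront-only menu extracting profit at least $\pi^{\sigma(k)} \cdot \sum_{i=1}^k \mu^{\sigma(i)}$. Given the subclaim, multiplying both sides by the weight $\mu^{\sigma(k)} / \sum_{i=1}^k \mu^{\sigma(i)}$ and summing over $k$ (noting these weights sum to $H_\mu$) would yield $H_\mu \cdot \Rupfront \geq \sum_k \mu^{\sigma(k)} \pi^{\sigma(k)} = R$, which is precisely the desired bound.

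For Part 1, the tight example, I would construct a single-action instance with an equal-revenue structure. Take one action $a$ of zero cost, $T$ outcomes each with $p^a_q = 1/T$, and label types so that $\mu^1 \leq \mu^2 \leq \cdots \leq \mu^T$. For each type $t$, set $v^t_q = T / \sum_{i=1}^t \mu^i$ if $q = t$, and $v^t_q = 0$ otherwise. In the two-part tariff, the contract $\cC^t = (a, 0, \mathbf{x}^t)$ with $x^t_t = v^t_t$ and $x^t_q = \infty$ for $q \neq t$ is both IC and IR (every type obtains zero utility), extracting expected revenue $1 / \sum_{i=1}^t \mu^i$ from type $t$, so summing gives $R = H_\mu$. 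Because there is only one action, the upfront-only menu collapses to a single posted upfront price, and the equal-revenue property of the valuations forces $\Rupfront = 1$. Hence $R / \Rupfront = H_\mu$ exactly.

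The main obstacle I foresee is establishing the subclaim in Part 2. Because $\sigma$ orders types by $\mu$, which is in general unrelated to the buyers' valuations, it is not immediate that the $k$ types with the smallest probabilities accept any specific single upfront contract. The subclaim will likely require constructing the $k$-th upfront contract carefully---perhaps based on $(a^{\sigma(k)}, w^{\sigma(k)})$ with targeted price or action adjustments---and then leveraging the IC constraints of the optimal two-part tariff menu to argue that a subset of types of total mass at least $\sum_{i=1}^k \mu^{\sigma(i)}$ satisfies the IR constraint in the upfront-only setting, with each accepting type contributing profit at least $\pi^{\sigma(k)}$.
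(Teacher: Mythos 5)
Your Part 1 construction matches the paper's exactly: a single zero-cost action, uniform outcome distribution, valuations $v^t_q = T/\sum_{i=1}^t \mu^i$ for $q=t$, and the observation that upfront-only menus collapse to a single posted price with $\Rupfront = 1$. For Part 2 the high-level skeleton is also the same as the paper's — bound $\Rupfront$ from below by the product of a cumulative probability mass and a per-type profit, then multiply by the weights $\mu^{\sigma(k)}/\sum_{i\le k}\mu^{\sigma(i)}$ and sum — and your reduction via \cref{two-usage-prices-suffice} is harmless but unnecessary (the paper works directly with general contracts, defining $r^t$ as full per-type profit including usage revenue).

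The genuine gap is exactly the one you flagged: your subclaim is stated in the $\sigma$-ordering (types sorted by increasing $\mu$), and in that ordering it is \emph{false}. There is no reason the $k$ types with smallest $\mu$ can all afford an upfront price tied to $\pi^{\sigma(k)}$; those types may have very small per-type profits. The paper's fix is to relabel by per-type profit $r^t$ in \emph{decreasing} order, not by $\mu$. Then the subclaim is immediate: replacing every contract by $(a^u, r^t + c(a^u), \mathbf{0})$ keeps all types of rank $\le t$ individually rational (each had value at least $r^u + c(a^u) \ge r^t + c(a^u)$ under the original contract, and dropping usage prices only increases value), and each participating type contributes profit exactly $r^t$, giving $\Rupfront \ge r^t \sum_{i\le t}\mu^i$. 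This yields $R \le \Rupfront \cdot \sum_t \mu^t/\sum_{i\le t}\mu^i$ with the sum taken in the $r$-decreasing order, and one then needs the additional observation (which the paper leaves implicit) that among all orderings, the $\mu$-increasing one maximizes $\sum_t \mu^t/\sum_{i\le t}\mu^i$ — an adjacent-swap argument shows swapping $a<b$ into increasing order gains $(b-a)ab/[(A+a)(A+b)(A+a+b)] > 0$ — so the sum is at most $H_\mu$. Your proposed repair (keeping the $\mu$-ordering and adjusting the contracts) does not resolve the issue, since the cumulative-mass bound cannot work for types that cannot afford the target price; the reordering by profit is the essential step.
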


\begin{proof}

\begin{enumerate}
    \item Since $H_\mu$ is invariant under permutations of buyer types, we can assume without loss of generality that $0<\mu^1\le \cds \le \mu^T \le 1$. We construct the following problem instance:
    \begin{itemize}
        \item Let $Q = [T]$. There is a single action $a$ with cost $c(a) = 0$ and transition probabilities $p^a_q = \fr{1}{T},\fl q\in Q$.
        \item Valuations are given by $v^t_q = \case{\fr{T}{\sum_{i=1}^t \mu^i} & \teif q = t \\ 0 & \teif q \neq t.}$
    \end{itemize}

    The utility of buyer type $t$ satisfies $$U(t;\cC^t) \le \fr{1}{T} \cd \fr{T}{\sum_{i=1}^t \mu^i} = \fr{1}{\sum_{i=1}^t \mu^i}.$$ The IR condition implies that the seller profit is upper bounded by the buyer utility, so $$\Pi_{\tt{opt}} \le \sum_{t\in [T]} \mu^t \cd U(t;\cC^t) = H_\mu.$$ This upper bound on $\Ropt$ can be achieved using an identical contract $$\cC = \bp{a, 0, \bp{\fr{T}{\sum_{i=1}^t \mu^i}}_{t\in [T]}}$$ for all types, so $$\Pi_{\tt{opt}} = \sum_{t\in [T]} \fr{\mu^t}{\sum_{i=1}^t \mu^i} = H_\mu.$$
        
    On the other hand, if the seller can only use upfront payments, note that any IC menu with a single action consists of a single contract since the contract with the lowest upfront payment yields the highest utility for all buyers. Hence all buyers will choose the contract with the lowest upfront payment. We perform casework on the value of the upfront payment $w$ by partitioning the space of possible upfront payments into disjoint ranges. In the range $w\in \left(\fr{1}{\sum_{i=1}^{t+1} \mu^i}, \fr{1}{\sum_{i=1}^{t} \mu^i}\right]$, exactly the first $t$ buyer types will choose the contract over the opt-out option. The seller revenue in this range is upper bounded by $\sum_{i=1}^t \mu^i \cd \fr{1}{\sum_{i=1}^{t} \mu^i} = 1$, with equality if $w = \fr{1}{\sum_{i=1}^{t} \mu^t}$ for some $t$. We conclude that $$\Rupfront = 1\implies \fr{\Pi_{\tt{opt}}}{\Rupfront} = \sum_{t\in [T]} \fr{\mu^t}{\sum_{i=1}^t \mu^i}.$$

    \item The contract $\cC^t = (a^t, w^t, \mathbf{x}^t)$ yields profit $$\pi^t := w^t - c(a^t) + \sum_{q\in Q} p^{a^t}_q x^t_q \cdot \one\bb{v^t_q \ge x^t_q}$$ from type $t$. By reordering the buyer types we can assume without loss of generality that $\pi^1 \ge \pi^2 \ge \cds \ge \pi^T$. We claim that for every $t\in [T]$, we can construct a modified menu with only upfront prices that achieves a profit of at least $\bp{\sum_{i=1}^t \mu^i} \cd \pi^t$. To prove this claim, replace every contract $\cC^u = (a^u, w^u, \mathbf{x}^u)$ in the original menu, including $\cC^t$, with $\hat{\cC}^u = \bp{a^u, \pi^t + c(a^u), \mathbf{0}}$. For any type $u\le t$, the revenue from type $u$ in the original menu is equal to $\pi^u + c(a^u)$, and the IR constraint for type $u$ in the original menu says that type $u$'s value for the outcomes induced by $a^u$ is at least the revenue $\pi^u + c(a^u)$. Combined with $\pi^u \ge \pi^t$, this implies $$U(u; \hat{\cC}^u) \ge \pi^u + c(a^u) - \bp{\pi^t + c(a^u)} \ge 0,$$
    so the modified menu is IR. For types $u>t$, if $U(u;\hat{\cC}^u) < 0$ then we replace $\hat{\cC}^u$ with the opt-out option. For types $u\le t$, each contract $\hat{\cC}^u$ collects an upfront payment of $\pi^t + c(a^u)$ from the buyer and hence yields profit at least $\pi^t + c(a^u) - c(a^u) = \pi^t$ for the seller. Hence $$\Rupfront \ge \bp{\sum_{i=1}^t \mu^i} \cd \pi^t,\quad \fl t \iff \mu^t \cd \pi^t \le \Rupfront \cd \fr{\mu^t}{\sum_{i=1}^t \mu^i},\quad \fl t.$$ Summing over $t\in [T]$ yields $$\Pi \le \Rupfront \cd \sum_{t\in [T]} \fr{\mu^t}{\sum_{i=1}^t \mu^i} \le \Rupfront \cd H_\mu.$$
\end{enumerate}
\end{proof}

\cref{upfront-payment-only} shows that the worst-case gap in profit between implementing the two-part tariff versus using only upfront payments is characterized by $H_\mu$, and furthermore this multiplicative gap is tight. Since $H_\mu \ge H_T$, this multiplicative loss in profit is always at least $H_T = \Omega(\log T)$, which is unbounded as the number of buyer types $T$ increases. It is thus crucial for the seller to incorporate usage payments into their contracts.

\begin{remark}
    We have established that outcome-dependent usage payments are crucial for maximizing seller profit in the service provider problem. A natural question is whether they also increase seller profit in other models. Interestingly, the answer turns out to be negative for the lottery pricing problem, which can be viewed as an instance of the service provider problem but with infinitely many actions that induce every possible outcome distribution. We prove in \cref{a:usage-payment-lottery-pricing} that the maximum seller revenue in lotteries is the same with or without usage payments.
\end{remark}

\subsection{Upfront payments are necessary} \label{upfront-payment-necessary}

\newcommand{\Rusage}{\Pi_{\tt{usage}}}

To prove that upfront payments are needed to achieve maximum seller profit, we consider a modified version of the service provider problem where all upfront prices are set to 0. Let $\Rusage$ denote the maximum seller profit achievable through usage payments only. Then we have the following result, proven in \cref{a:usage-payment-only}:

\begin{proposition} \label{usage-payment-only}
    There exists a service provider problem instance for which $\fr{\Ropt}{\Rusage} \ge \fr32$.
\end{proposition}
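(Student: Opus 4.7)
The plan is to exhibit a small instance realizing the claimed gap. Consider the instance with a single action of cost $0$ inducing the uniform distribution $p_q = 1/3$ over three outcomes $q \in \{1,2,3\}$; three buyer types with valuations $\mathbf{v}^1 = (3,0,0)$, $\mathbf{v}^2 = (0,3,0)$, and $\mathbf{v}^3 = (1,1,1)$; and prior $\mu = (\frac{1}{4}, \frac{1}{4}, \frac{1}{2})$. Every type has expected outcome value $\sum_q p_q v^t_q = 1$. I will show $R \ge 1$ while $\Rusage \le 2/3$, which gives the ratio $3/2$.

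The lower bound $R \ge 1$ follows from the single two-part contract with upfront price $1$ and all usage prices zero: every type accepts every outcome (since $v^t_q \ge 0$) and obtains net utility $\sum_q p_q v^t_q - 1 = 0$, so the menu is IC and IR and extracts revenue $1$ from every type. For the upper bound on $\Rusage$, I proceed by case analysis on the usage vector $\mathbf{x}^3$ that an arbitrary usage-only menu assigns to type $3$. Two observations drive the argument: (i) for $t \in \{1, 2\}$, if $x^3_t \le 1$ then $U(t; \cC^3) = (3 - x^3_t)/3 \ge 2/3$ since type $t$'s only positive value is $v^t_t = 3$, so IC gives $U(t; \cC^t) \ge 2/3$, and because the usage-only revenue from type $t$ equals $\sum_{q'} p_{q'} v^t_{q'} \mathbf{1}[v^t_{q'} \ge x^t_{q'}] - U(t; \cC^t) \le 1 - 2/3 = 1/3$, the seller collects at most $1/3$ from type $t$; (ii) the seller's revenue from type $3$ is at most $|\{q : x^3_q \le 1\}|/3$, since $v^3_q = 1$ so type $3$ only pays (and only contributes value) for outcomes priced at most $1$.

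Splitting on the pair $(\mathbf{1}[x^3_1 \le 1], \mathbf{1}[x^3_2 \le 1])$, the $\mu$-weighted total revenue is bounded by $\frac{1}{4}\cdot\frac{1}{3} + \frac{1}{4}\cdot\frac{1}{3} + \frac{1}{2}\cdot 1 = \frac{2}{3}$ in the ``both $\le 1$'' case; by $\frac{1}{4} + \frac{1}{4} + \frac{1}{2}\cdot\frac{1}{3} = \frac{2}{3}$ in the ``both $> 1$'' case, where type $3$ rejects outcomes $1,2$ and so contributes at most $V^3 \le 1/3$; and by $\frac{1}{4}\cdot\frac{1}{3} + \frac{1}{4} + \frac{1}{2}\cdot\frac{2}{3} = \frac{2}{3}$ in the ``exactly one $\le 1$'' case, where type $3$ rejects one of outcomes $1,2$ and so contributes at most $V^3 \le 2/3$ while one of types $1,2$ is still constrained by (i). In every case the total is at most $2/3$, so $\Rusage \le 2/3$ and $R/\Rusage \ge 3/2$.

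The main obstacle will be the upper bound: combining the IC-induced rent of at least $2/3$ on the on-diagonal utilities of types $1$ and $2$ with the value-extraction bound on type $3$, and verifying that all three cases coincide at $2/3$ so that the worst case is exactly tight.
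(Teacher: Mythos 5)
Your construction is correct and the claimed bounds check out: with equal expected value $1$ for all three types, the single contract with upfront price $1$ and zero usage prices gives $R \ge 1$, and your case analysis on $(\mathbf{1}[x^3_1 \le 1], \mathbf{1}[x^3_2 \le 1])$ correctly yields $\Rusage \le 2/3$ — observation (i) is a valid IC-rent bound (revenue from type $t$ equals expected accepted value minus utility, hence at most $1 - 2/3$), observation (ii) correctly caps type $3$'s payments by the number of outcomes priced at most $1$, and all three cases sum to exactly $2/3$. The overall strategy is the same as the paper's: exhibit a gap instance in which all types have identical expected value so that a single upfront-only contract extracts the full surplus, while usage-only pricing is forced to leave information rent. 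The difference is in the witness and the style of the upper bound: the paper uses a minimal symmetric instance (two types, two outcomes, valuations $(1,\tfrac12)$ and $(\tfrac12,1)$, uniform prior) and rules out $\Rusage > \tfrac12$ by a contradiction argument chaining the two types' IC constraints, whereas you use a three-type, three-outcome instance with a ``pooling'' uniform-valuation type and a direct, exhaustive case analysis. The paper's instance is smaller and arguably cleaner; yours has the mild advantage of being tight ($\Rusage = 2/3$ is achieved by pricing every outcome at $1$) and of making transparent where the rent loss comes from, but neither improves the $\tfrac32$ constant, which the paper itself leaves as an open question to tighten.
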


Note that the multiplicative gap of $\fr32$ in \cref{usage-payment-only} is not as large as the $H_\mu$ gap in \cref{usage-payment-necessary}. This could indicate that charging a buyer upfront is less important than charging for the buyer's usage, though deriving tight bounds for $\fr{\Ropt}{\Rusage}$ remains an open question.

\para{Incomparability of $\Rupfront$ and $\Rusage$.}

\cref{upfront-payment-only} and \cref{usage-payment-only} further show that the quantities $\Rupfront$ and $\Rusage$ are generally incomparable. In the proof of \cref{upfront-payment-only}, we constructed a profit-maximizing menu with only usage payments, whereas any menu with only upfront payments is suboptimal, so $\Rusage > \Rupfront$ for that problem instance. In the proof of \cref{usage-payment-only}, we constructed a profit-maximizing menu with only upfront payments, whereas any menu with only usage payments is suboptimal, so $\Rupfront > \Rusage$ for that problem instance.

\subsection{Voluntary usage subsumes mandatory usage} \label{voluntary-usage}

In this section we show that seller profit is always weakly higher when the seller allows the buyer to choose whether to use the realized outcome or not than when the seller forces the buyer to use and pay for every outcome. Recall that the former assumption is \emph{voluntary usage} and the latter assumption is \emph{mandatory usage}. We show a stronger result: when the seller can commit to an action, mandatory usage is a \emph{special case} of voluntary usage with zero usage prices. This is done by establishing a surprising equivalence: the maximum seller profit under mandatory usage is exactly equal to the maximum seller profit under voluntary usage with zero usage prices, which is the quantity $\Rupfront$ from \cref{usage-payment-necessary}.

\newcommand{\Rmandatory}{\Pi_{\tt{mandatory}}}

Consider a modified version of the service provider problem that has the same seller-buyer interaction protocol described in \cref{model} but where in the last step the buyer is forced to accept the realized outcome $q$ and pay the usage price $x^t_q$. Denote by $\Rmandatory$ the maximum seller profit in this \emph{mandatory usage} model. Our main result is the following:

\begin{theorem} \label{voluntary-usage-subsumes}
 Let $\mu \in \Delta_{[T]}$ denote the buyer type distribution. Then
 \begin{equation} \label{R-Rmandatory}
     \Rupfront = \Rmandatory \le \Pi_{\te{opt}} \le H_\mu \cd \Rmandatory,
 \end{equation}
 %  where the prior-dependent factor $$H_\mu = \sum_{t\in [T]} \fr{\mu^t}{\sum_{i=1}^t \mu^i}$$ is the same quantity from \cref{usage-payment-necessary}.
where $H_\mu$ was defined in \cref{eq:H-mu}. Moreover, the bounds above are tight: there exist problem instances for which $\Pi_{\te{opt}} = \Rmandatory$ and problem instances for which $\Pi_{\te{opt}} =  H_\mu \cd \Rmandatory$.
\end{theorem}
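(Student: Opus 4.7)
The plan is to reduce everything to \cref{upfront-payment-only} by establishing the key equivalence $\Rupfront=\Rmandatory$. Once this equivalence is in hand, the chain
\[
\Rmandatory \;=\; \Rupfront \;\le\; R \;\le\; H_\mu\cd \Rupfront \;=\; H_\mu\cd \Rmandatory
\]
is immediate: the middle inequality holds because any upfront-only menu is already a feasible voluntary-usage menu (setting $x_q=0$, the buyer accepts every outcome by the tie-breaking rule) with the same revenue, and the right-hand inequality is exactly the worst-case gap established in \cref{upfront-payment-only}.

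The heart of the argument is $\Rupfront=\Rmandatory$. Writing $V^t(a)\coloneq\sum_q p^a_q v^t_q$, the key observation is that under mandatory usage a type-$t$ buyer selecting contract $(a,w,\mathbf{x})$ has utility $V^t(a)-w-\sum_q p^a_q x_q = V^t(a) - P$, where $P\coloneq w+\sum_q p^a_q x_q$ is the total expected payment, while the contribution to seller profit is $P - c(a)$. Thus under mandatory usage only the pair $(a,P)$ enters every IC constraint, IR constraint, and the objective. The same formulas hold in the upfront-only regime with $P=w$, because $\mathbf{x}=\mathbf{0}$ makes the buyer accept every realized outcome. From here both directions follow. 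Any upfront-only menu $\bp{(a^t,w^t,\mathbf{0})}_t$ is directly a feasible mandatory menu with identical revenue, so $\Rupfront \le \Rmandatory$. Conversely, given any mandatory menu $\bp{(a^t,w^t,\mathbf{x}^t)}_t$, I would replace each contract by $(a^t, P^t, \mathbf{0})$ with $P^t\coloneq w^t+\sum_q p^{a^t}_q x^t_q$; this preserves every IC inequality $V^t(a^t)-P^t\ge V^t(a^{t'})-P^{t'}$, the IR inequality $P^t\le V^t(a^t)$, and the per-type profit $P^t - c(a^t)$, yielding a feasible upfront-only menu with the same revenue, so $\Rmandatory\le\Rupfront$.

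For the tightness claims, the instance from part 1 of \cref{upfront-payment-only} already witnesses $R/\Rupfront = H_\mu$, which via the equivalence immediately gives $R = H_\mu \cd \Rmandatory$. For the other extreme $R=\Rmandatory$, any single-type instance ($T=1$) works: picking an action $a^*$ that maximizes $V^1(a)-c(a)$ and setting $P=V^1(a^*)$ extracts the full welfare $V^1(a^*)-c(a^*)$, which upper-bounds $R$ in any seller-buyer model, so voluntary usage offers no additional leverage. The main obstacle is noticing that mandatory usage collapses to a one-dimensional payment per type via the map $(a,w,\mathbf{x})\mapsto(a,P)$; once this is seen, the equivalence with upfront-only pricing is purely mechanical, and no further analysis of voluntary usage is needed beyond invoking \cref{upfront-payment-only}.
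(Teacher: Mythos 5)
Your proof is correct and takes essentially the same approach as the paper: the central step in both is the observation that under mandatory usage the contract data collapse to the single expected payment $P = w + \sum_q p^{a}_q x_q$, which the paper phrases as iteratively ``redistributing'' usage prices into the upfront price, yielding $\Rupfront = \Rmandatory$, after which both proofs invoke \cref{upfront-payment-only} for the remaining inequalities and tightness. The only cosmetic difference is your tightness witness for $R = \Rmandatory$: you use a literal $T=1$ instance, while the paper takes $T$ types with identical valuation vectors (which is effectively the same thing while keeping $\mu$ fixed).
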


To prove the theorem statement, it suffices to show that $\Rmandatory = \Rupfront$, as the rest of the inequalities follow immediately from \cref{upfront-payment-only}. To establish  that $\Rmandatory = \Rupfront$, we use the following \emph{payment redistribution} argument:
\begin{lemma}\label{mandatory-ic-claim}
    Assuming mandatory usage, any IC menu can be modified into an IC menu with only upfront prices such that buyer utilities and seller profit remain the same.
\end{lemma}
\begin{proof}
    As long as there exists a contract $\cC^u= (a^u, w^u, \mathbf{x}^u)$ with a positive usage price $x^u_q > 0$, we modify $\cC^u$ to $\hat{\cC}^u$ by setting $x'^u_q = 0$ and $w'^u = w^u + p^{a^u}_q x^u_q$. Because of the mandatory usage assumption, the decrease in usage price increases all buyer utilities by the same amount that the increase in upfront prices decreases them. Since buyer utilities remain the same, the modified menu is IC. Furthermore, the revenue from each type is redistributed in equal amounts from the usage price into the upfront price and hence remains the same. 
\end{proof}

\begin{proof}[Proof of \cref{voluntary-usage-subsumes}]
        By \cref{mandatory-ic-claim}, a profit-maximizing menu under mandatory usage without loss of generality has only upfront payments and no usage payments. Under zero usage payments, however, the distinction between mandatory usage and voluntary usage is irrelevant since the buyer will always accept the outcome as their nonnegative valuation will always be at least the zero usage price. More formally, the claim implies the following equivalence:
        \begin{itemize}
            \item For every menu under mandatory usage there exists a menu under voluntary usage that has only upfront payments and achieves the same profit.
            \item Every menu under voluntary usage that has only upfront payments is a menu under mandatory usage that achieves the same profit.
        \end{itemize}
       We conclude that $\Rmandatory = \Rupfront$ as desired. As for the tightness of the inequalities in \cref{voluntary-usage-subsumes}, note that from \cref{upfront-payment-only}, we know there exist problem instances with $\Pi_{\te{opt}} = H_\mu \cd \Rupfront = H_\mu \cd \Rmandatory$. Furthermore, any problem instance where every type has the same valuation vector is equivalent to an instance with a single buyer type, for which  we have $H_\mu = 1$, implying again from \cref{upfront-payment-only} that $\Pi_{\te{opt}} = \Rupfront = \Rmandatory$. 
\end{proof}

\begin{remark}
    Past works on contract design assume forced payments for each outcome regardless of whether the price exceeds the buyer's valuation of the outcome, for example selling hidden actions \citep{bernasconi2024agent}, or charge only a lump sum payment for the action without requiring usage payments, for example selling lotteries \citep{chen2015complexity}. A natural  question is why we should consider a two-part payment scheme that first charges an  upfront price but then allows buyers the freedom to accept or reject the realized outcome.  From a technical perspective, \cref{voluntary-usage-subsumes} proves that the maximum seller profit is \emph{always} weakly higher under voluntary usage than under mandatory usage, and furthermore sometimes significantly so, as characterized by the multiplicative factor $H_\mu\in [H_T, T)$. But also in practice, voluntary usage gives the buyer the freedom to choose and hence could be more attractive from a marketing standpoint.
\end{remark}

Our result may appear counterintuitive since one might think the seller should benefit when they have the power to enforce something, in this case the buyer's acceptance of the outcome.  While the mandatory usage requirement is beneficial {\em after} the buyer has chosen to participate, it reduces the likelihood that a buyer participates in the mechanism. Under mandatory usage the seller has to incentivize the buyer more to participate in the mechanism. Voluntary usage, on the other hand, reduces the buyer's risk from the uncertain nature of outcomes and hence can make buyers value a contract \emph{more}. This makes them more willing to pay a higher upfront price, leading to increased seller profit.

\section{Characterizing profit-maximizing menus} \label{characterizations}

Having established the necessity of the two-stage payment structure, we now focus on identifying the profit-maximizing menu (\cref{profit-maximization-direct-menu}). Solving this problem directly is challenging, since the feasible space of menus in \cref{profit-maximization-direct-menu} is huge, requiring a selection of a usage price for each contract and outcome pair. To simplify the search space, in this section, we characterize the structure of the profit-maximizing menu.  These characterizations play an important role in deriving our main results in \cref{complexity} and \cref{single-parameter}.
% For the lottery pricing problem, we have shown that incorporating usage prices does not generate any extra profit and thus usage prices are not needed for this problem. While \cref{upfront-price-only} shows that usage prices are required to achieve maximum profit in the service provider problem, i
% To reduce the large search space over usage prices, it is natural to ask whether some tuples of usage prices are always better than others. Our main result in this section is a general reduction between IC menus that allows the seller to consider only usage prices that are either 0 or $\infty$.
We begin with the following result that shows it suffices to set the usage price for each outcome and contract to be one of two values, either $0$ or $\infty$.

\begin{theorem} \label{two-usage-prices-suffice}
    Any IC menu can be modified so that all usage prices satisfy $x^t_q\in \bc{0, \infty}$ while leaving the seller's profit and the buyers' utilities unchanged.
\end{theorem}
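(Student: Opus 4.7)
The plan is to modify each contract $\mathcal{C}^t = (a^t, w^t, \mathbf{x}^t)$ independently via a payment redistribution, analogous to the argument used inside the proof of \cref{voluntary-usage-subsumes}. For a fixed type $t$, partition the outcomes into the set $A^t = \{q : v^t_q \ge x^t_q\}$ on which type $t$ accepts under $\mathcal{C}^t$, and the set $B^t = \{q : v^t_q < x^t_q\}$ on which type $t$ rejects. Define $\tilde{\mathcal{C}}^t = (a^t, \tilde w^t, \tilde{\mathbf{x}}^t)$ by setting $\tilde x^t_q = 0$ for $q \in A^t$ and $\tilde x^t_q = \infty$ for $q \in B^t$, and transferring the expected usage revenue into the upfront price via $\tilde w^t = w^t + \sum_{q \in A^t} p^{a^t}_q x^t_q$.

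Next I would verify the three invariants for $\tilde{\mathcal{M}} = (\tilde{\mathcal{C}}^t)_{t \in [T]}$. First, buyer $t$'s own utility is preserved: under $\tilde{\mathcal{C}}^t$ type $t$ accepts exactly the outcomes in $A^t$ (value $v^t_q \ge 0 = \tilde x^t_q$) and rejects those in $B^t$, yielding $U(t; \tilde{\mathcal{C}}^t) = \sum_{q \in A^t} p^{a^t}_q v^t_q - \tilde w^t = \sum_{q \in A^t} p^{a^t}_q (v^t_q - x^t_q) - w^t = U(t; \mathcal{C}^t)$. Second, the expected revenue from type $t$ is exactly the same transferred quantity, namely $\tilde w^t = w^t + \sum_{q \in A^t} p^{a^t}_q x^t_q$, so the seller's profit is unchanged.

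The final, and key, step is checking IC. For this I would show the single comparison $U(t'; \tilde{\mathcal{C}}^t) \le U(t'; \mathcal{C}^t)$ for every $t' \neq t$; combined with the equality from the previous paragraph, the chain $U(t; \tilde{\mathcal{C}}^t) = U(t; \mathcal{C}^t) \ge U(t; \mathcal{C}^{t'}) \ge U(t; \tilde{\mathcal{C}}^{t'})$ delivers IC for $\tilde{\mathcal{M}}$, and likewise IR because IR was implied by the own-type utility. To prove the inequality, expand both sides and cancel $w^t$ to get
\begin{align*}
U(t'; \tilde{\mathcal{C}}^t) - U(t'; \mathcal{C}^t)
= \sum_{q \in A^t} p^{a^t}_q \Bigl[(v^{t'}_q - x^t_q) - \max\{v^{t'}_q - x^t_q, 0\}\Bigr]
- \sum_{q \in B^t} p^{a^t}_q \max\{v^{t'}_q - x^t_q, 0\},
\end{align*}
and observe that each bracket on the $A^t$ sum is $\min\{v^{t'}_q - x^t_q, 0\} \le 0$ while each term on the $B^t$ sum is nonnegative, so the whole expression is $\le 0$.

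The main obstacle, or the place where care is needed, is precisely this IC verification: the modification for one type's contract is defined in terms of that type's own valuations, and one has to confirm that replacing any intermediate usage price with either $0$ or $\infty$ (with the matching upfront transfer) can only weakly decrease the utility a deviating type $t'$ would get from the contract. Everything else is a direct bookkeeping check on the redistribution.
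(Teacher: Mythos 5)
Your proposal is correct and takes essentially the same approach as the paper: you partition outcomes by whether type $t$ accepts them, redistribute the expected usage revenue from the accepted outcomes into the upfront price, send rejected-outcome prices to $\infty$, and verify term-by-term that any deviating type $t'$ weakly loses utility on the modified contract, which together with the preserved own-type utility gives IC and IR. The explicit algebraic identity you display for $U(t';\tilde{\mathcal{C}}^t) - U(t';\mathcal{C}^t)$ is a tidy way to package the paper's case analysis, but it is the same argument.
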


We give a proof sketch here: the formal proof is in \cref{a:two-usage-prices-suffice}. Define $S^t \coloneq \bc{q: v^t_q \ge x^t_q}$ to be the set of outcomes that type $t$ accepts. Our idea is to \emph{redistribute} all usage prices for outcomes in $S^t$ into the upfront price. For each $q\in S^t$, we set $x^t_q = 0$ and increase the upfront price $w^t$ by $p^{a^t}_q x^t_q$, calling the new contract $\cC'^t$. In this way, the usage price decreases by exactly the same amount that the upfront price increases and hence $U(t;\cC^{t'}) = U(t;\cC^t)$. For all other types $u$, we have $U(u;\cC'^t) \le U(u;\cC^t)$ since $u$ now pays an additional upfront price of $p^{a^t}_q x^t_q$ while their utility from the decrease in usage price $x^t_q$ increases by at most $p^{a^t}_q x^t_q$. Hence $u$ will still choose $\cC^u$ so the menu remains IC. After applying the price redistribution for all contracts, the end result is a menu whose usage prices $x^t_q$ satisfy $x^t_q = 0$ or $x^t_q > v^t_q$. In the latter case we can simply increase the usage prices to $\infty$.

\begin{remark} \label{two-usage-prices-suffice-remark}
    Note that the $\infty$ price in \cref{two-usage-prices-suffice} is only for notational convenience and the proof similarly works when $\infty$ is replaced by any price that excludes all types from accepting the outcome, for example the maximum valuation $\max_{t,q} v^t_q$ of any type for any outcome. \cref{two-usage-prices-suffice} has the interpretation that without loss of generality we can view the seller's revenue as coming entirely from the upfront price as long as the seller can exclude buyers from some outcomes. This is similar in spirit to many software subscriptions like ChatGPT for which various subscription tiers provide access to models with differing capabilities.
    % \kidelete{The $\infty$ price for an outcome is merely a way to exclude buyers from using that outcome and can be thought of as a form of \emph{multidimensional screening} of buyer types \citep{rochet2003economics}.}
    We emphasize that even though no revenue is collected from usage prices in $\bc{0,\infty}$, having the usage prices is still necessary by \cref{upfront-payment-only}.
\end{remark}

To summarize, \cref{two-usage-prices-suffice} shows that there exists a profit-maximizing mechanism in which the seller collects upfront prices for various actions but limits the buyer's access to certain outcomes. For the rest of the paper we assume that all usage prices are in $\bc{0,\infty}$, so the revenue from type $t$ is simply the upfront price $w^t$. The service provider problem, \cref{profit-maximization-direct-menu}, simplifies to the following:

% \begin{figure}
\begin{tcolorbox}[title=Maximizing profit of a direct menu with usage prices in $\bc{0,\infty}$]
\vspace{-1em}
    \begin{align}
             \max_{ \substack{\{(a^{t'}, w^{t'}, \mathbf{x}^{t'}) \}_{{t'} \in [T]} \\ x^{t'}_q\in \bc{0,\infty}}} & \quad \bE_{t \sim \mu} \left[ w^t  - c(a^t)
    \right] \label{binary-optimization-problem} \\ 
        & U(t; \cC^t) \geq U(t; \cC^{t'}) && \fl t, t' \in [T]   \notag \\ 
        & U(t; \cC^t) \geq 0  && \fl t  \in [T]  \notag
          \end{align}   
\end{tcolorbox}

Next, we show that in any profit-maximizing menu it is not necessary to set the usage price of any outcome to be $\infty$ in the contracts chosen by the types that pay the highest upfront price to the seller. To state our result, we make the following definition:

\begin{definition}[Highest-revenue types] \label{highest-revenue-definition} 
    For a menu $\mathcal{M}$ with $x^t_q \in \{0, \infty\}$ for each contract $t$ and each outcome $q$, a type $t$ is a \emph{highest-revenue} type if it pays the largest upfront price, i.e., $t \in \argmax_{s} w^{s}$.
    % \kidelete{In the context of menus with usage prices in $\bc{0,\infty}$, a \emph{highest-revenue} type $t$ is a type with the largest upfront price, namely $\argmax_{t} w^t$. }
\end{definition}

Note that, in general, more than one type can be a highest-revenue type. The following result implies that in the profit-maximizing menu, the highest-revenue types are never barred from using the realized outcome.

\begin{proposition} \label{highest-type-no-usage-prices}
    Any IC menu with usage payments in $\bc{0,\infty}$ can be modified so that the contracts for the highest-revenue types have all usage prices equal to zero while remaining IC and weakly increasing seller profit.
\end{proposition}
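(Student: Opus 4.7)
The plan is to modify the given IC menu $\cM$ so that every contract assigned to a highest-revenue type has zero usage prices, by iteratively processing each such type. Let $t^*$ be a highest-revenue type with contract $\cC^{t^*} = (a^{t^*}, w^{t^*}, \mathbf{x}^{t^*})$ having some coordinate $x^{t^*}_q = \infty$, and let $Q^\infty := \{q : x^{t^*}_q = \infty\}$. I would define the new contract $\cC'^{t^*} := (a^{t^*}, w^{t^*}, \mathbf{0})$ by zeroing out every usage price, and replace $\cC^{t^*}$ with $\cC'^{t^*}$ in the menu. This is the analogue of the redistribution move used for \cref{two-usage-prices-suffice}, but going in the opposite direction ($\infty \to 0$).

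First I would verify that the constraints for $t^*$ itself remain satisfied. Using the identity $U(t^*; \cC'^{t^*}) = U(t^*; \cC^{t^*}) + \sum_{q \in Q^\infty} p^{a^{t^*}}_q v^{t^*}_q \ge U(t^*; \cC^{t^*}) \ge 0$, both IR for $t^*$ and $t^*$'s weak preference for $\cC'^{t^*}$ over any other contract in the menu are preserved. For other types $v \neq t^*$, the utility $U(v; \cC'^{t^*}) = U(v; \cC^{t^*}) + \sum_{q \in Q^\infty} p^{a^{t^*}}_q v^v_q$ may now exceed $U(v; \cC^v)$ and violate IC; in such cases I would reassign $v$ to $\cC'^{t^*}$, which trivially restores IC for $v$, while non-reassigned types still prefer their own contracts by the original IC.

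The main obstacle is the profit analysis for reassigned types. Per reassigned $v$, revenue changes weakly from $w^v$ to $w^{t^*} \ge w^v$ (using that $t^*$ is highest-revenue), but the action cost changes from $c(a^v)$ to $c(a^{t^*})$, which may be strictly larger. To handle this, I plan to perform the zeroing together with raising $w^{t^*}$ up to $t^*$'s IR-binding level $\sum_q p^{a^{t^*}}_q v^{t^*}_q$: this extracts extra revenue from $t^*$ and simultaneously deters any type $v$ with $\sum_q p^{a^{t^*}}_q v^v_q \le \sum_q p^{a^{t^*}}_q v^{t^*}_q$ from wanting to deviate (since then $U(v; \cC'^{t^*}) \le 0 \le U(v; \cC^v)$), eliminating the need to reassign them. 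For the remaining types $v$ whose expected valuation on $a^{t^*}$'s outcomes strictly exceeds $t^*$'s and for whom the reassignment still occurs, a careful exchange argument combining the original IC constraint $U(v; \cC^v) \ge U(v; \cC^{t^*})$ with the extra revenue obtained from raising $w^{t^*}$ bounds the cost gap $c(a^{t^*}) - c(a^v)$ by the enlarged revenue gap, ensuring weak profit increase. Iterating this procedure across all highest-revenue types completes the construction.
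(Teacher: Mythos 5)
Your core move matches the paper's: zero out the usage prices of the highest-revenue type $t^*$, note $t^*$'s own utility only goes up, and reassign to $\cC'^{t^*}$ any type $v$ who now prefers it. Up to your second paragraph you are essentially reproducing the published argument. The paper then closes by observing that each reassigned type $v$ now yields revenue $w^{t^*} \ge w^v$ and concludes "profit weakly increases," whereas you (correctly) flag that this jump from revenue to profit is not justified when $c(a^{t^*}) \ne c(a^v)$. That observation is astute: the paper's own argument, as written, only controls per-type \emph{revenue}, and the action costs do not appear in the IC/IR constraints, so nothing forces $w^{t^*} - c(a^{t^*}) \ge w^v - c(a^v)$. (In the paper's application the proposition is invoked with a single action, where the gap is vacuous, but the proposition is stated for general costs.)

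The problem is your proposed repair. You raise $w^{t^*}$ to the IR-binding level $\sum_q p^{a^{t^*}}_q v^{t^*}_q$, which drops $U(t^*;\cC'^{t^*})$ to exactly $0$. But the original menu only guarantees $U(t^*;\cC^{t^*}) \ge U(t^*;\cC^u)$ and $U(t^*;\cC^{t^*}) \ge 0$; if $t^*$ had strictly positive surplus from some other contract $\cC^u$ in the original menu (perfectly possible under IC), then after your modification $U(t^*;\cC^u) > 0 = U(t^*;\cC'^{t^*})$ and $t^*$ switches to $\cC^u$. Now the seller collects only $w^u$ from $t^*$, and your claimed "extra revenue from $t^*$" evaporates. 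You would need either to argue $t^*$ has no surplus elsewhere (not true in general for an arbitrary IC menu), or to simultaneously adjust the other contracts, which reopens a cascade you do not control. Separately, the "careful exchange argument" bounding $c(a^{t^*}) - c(a^v)$ by the enlarged revenue gap is only asserted; the IC inequality $U(v;\cC^v) \ge U(v;\cC^{t^*})$ relates upfront prices to buyer \emph{values}, not to seller action costs, and I do not see how to extract the needed comparison from it. So as written your proof has two concrete holes: a broken IC constraint for $t^*$ after the price raise, and an unproved cost-gap bound for reassigned types. The paper avoids both by never touching $w^{t^*}$, at the price of tacitly working in a setting where revenue and profit coincide.
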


\begin{proof}
    For each highest-revenue type $t$, we replace type $t$'s contract $\cC^t = \bp{a^t, w^t, \mathbf{x}^t}$ with $\cC'^t = \bp{a^t, w^t, \mathbf{0}}$. Since usage prices decrease from $\cC^t$ to $\cC'^t$, we have $U\bp{t; \cC'^t} \ge U\bp{t; \cC^t}$, so type $t$ will choose $\cC'^t$ over any other contract $\cC^u$ and thus yield the same seller revenue $w^t$. Any other type $u\neq t$ will choose either $\cC^u$ or $\cC'^t$ in the modified menu, yielding seller revenue $w^u$ or $w^t \ge w^u$, respectively. In the latter case we replace $\cC^u$ by $\cC'^t$ in the modified menu to maintain IC, noting that seller profit weakly increases. 
\end{proof}

\begin{remark}
     \cref{highest-type-no-usage-prices}, which shows that the highest-revenue types do not require usage payments is similar in flavor to a result by \citet{bergemann2018design} that shows in the context of selling information that in any optimal menu of statistical experiments, the highest buyer types purchase the fully informative experiment. A similar phenomenon of allowing higher buyer types to receive more information is reflected in \citet{liu2021optimal}.
\end{remark}
\section{Complexity of computing a profit-maximizing menu} \label{complexity}

Using the characterizations of the profit-maximizing menu obtained in the previous section, we are now ready to analyze the computational complexity of identifying such a menu. We first establish in \cref{hardness} that computing an exact profit-maximizing menu is \tbf{NP}-hard even in a seemingly simple setting of two buyer types and a single seller action. In light of this negative result, in \cref{fptas-section}, we consider the problem of \emph{approximating} the profit-maximizing menu. For that problem, we obtain a positive result: we show there exists a fully-polynomial time approximation scheme (FPTAS) for maximizing seller profit when the number of buyer types $T$ is constant and when the profit is at least a constant fraction of the cost. To elaborate, for a fixed number of buyer types $T$ and any constant $\eps>0$, we can  find a menu that achieves at least a $1-\eps$ fraction of the maximum seller profit in time polynomial in $\fr{1}{\eps}$, the number of seller actions, and the number of outcomes.

\subsection{\tbf{NP}-hardness for two types and a single action} \label{hardness}

We show that computing the exact maximum seller profit is \tbf{NP}-hard even when there are just two buyer types and just a single seller action. (Maximizing seller profit for one buyer type is trivial because for each seller action we can extract maximum profit by setting an upfront price equal to exactly the buyer's value so that there is no buyer surplus.)

\begin{theorem} \label{np-hardness-two-types}
Computing the exact maximum seller profit in the service provider problem is \tbf{NP}-hard even when there are only two buyer types and a single seller action.
\end{theorem}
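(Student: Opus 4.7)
The plan is to reduce from \textsc{Subset Sum}. By \cref{two-usage-prices-suffice}, every IC menu can be taken to use only usage prices in $\{0, \infty\}$, so each type's contract reduces to choosing a subset $S^t \subseteq Q$ of allowed outcomes together with an upfront price $w^t$. With a single action $a$, let $V^t(S) \coloneq \sum_{q \in S} p^a_q v^t_q$. For any fixed $(S^1, S^2)$, optimizing $\mu^1 w^1 + \mu^2 w^2$ subject to the two IR and two IC constraints is a two-variable LP whose optimum lies at one of a handful of canonical vertices, each corresponding to a different binding configuration; the first step is to enumerate these vertices and express the seller's profit as a piecewise-linear function of the four quantities $V^t(S^{t'})$ for $t, t' \in \{1, 2\}$.

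The key structural observation is that two of these vertex regimes are competitive and meet along a knife-edge condition of the form $V^2(S^1) = V^1(S^1)$. The regime in which IR for the low type and the upward IC for the high type both bind yields profit $V^1(S^1) + \mu^2[V^2(S^2) - V^2(S^1)]$, valid when $V^2(S^1) \ge V^1(S^1)$; the regime in which both IR constraints bind yields profit $\mu^1 V^1(S^1) + \mu^2 V^2(S^2)$, valid when $V^2(S^1) \le V^1(S^1)$. Within each regime the profit is linear in the indicators $\mathbf{1}[q \in S^1]$, but the coefficient on each item flips sign across the knife-edge, so the overall maximum is attained precisely on the knife-edge whenever it is reachable. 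This is the combinatorial structure that will encode exact subset sum.

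From a \textsc{Subset Sum} instance $(a_1, \ldots, a_n; T)$ with $A \coloneq \sum_q a_q > T$, I would construct a service provider instance with $\mu^1 = \mu^2 = 1/2$, a single zero-cost action, outcomes $Q = \{1, \ldots, n, \star\}$ equipped with $p^a_q = 1/(n+1)$, valuations $v^1_q = a_q$ and $v^2_q = 3 a_q$ for $q \in [n]$, together with $v^1_\star = L$ and $v^2_\star = L - 2T$ for a sufficiently large constant $L$. A direct calculation gives $V^2(S^1) - V^1(S^1) = \frac{1}{n+1}\left(2\sum_{q \in S^1 \cap [n]} a_q - 2T \cdot \mathbf{1}[\star \in S^1]\right)$, so the knife-edge $V^2(S^1) = V^1(S^1)$ with $\star \in S^1$ is exactly $\sum_{q \in S^1 \cap [n]} a_q = T$. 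Setting $S^2 = Q$, both competing regime profits peak at the common value $\pi^\star \coloneq (2L + 3A - T)/(2(n+1))$ precisely at this knife-edge, so the max profit equals $\pi^\star$ iff some subset sums to $T$.

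The main obstacle will be ruling out alternative configurations that match or exceed $\pi^\star$ without corresponding to a subset sum. The plan is to pick $L$ large enough that any near-optimal menu must include $\star$ in both $S^1$ and $S^2$, and then to verify by case analysis that the remaining symmetric vertex (IR for the high type and IC for the low type bind) yields strictly smaller profit than $\pi^\star$ whenever $A > T$, after also checking over alternative choices of $S^2$ (where $\star \notin S^2$ collapses $S^2$ to $\emptyset$ and $\star \in S^2$ caps the achievable profit via the same knife-edge condition). This interaction between the $S^1$ and $S^2$ choices across different LP vertices is the delicate part of the argument.
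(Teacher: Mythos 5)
Your proposal follows essentially the same route as the paper: reduce from a number-partitioning problem (you use \textsc{Subset Sum}, the paper uses \textsc{Partition}), fix a single zero-cost action with uniform outcome probabilities, give ordinary outcomes the $3\times$ valuation ratio $v^2_q = 3v^1_q$, add one anchor outcome with large $v^1$ and small $v^2$ to force it into $S^1$ and rule out type $1$ being the highest-revenue type, and then observe that the two binding regimes of the two-variable LP for upfront prices meet at a knife-edge $V^2(S^1)=V^1(S^1)$ where the profit is maximized iff the partition constraint is satisfiable. The paper streamlines the ``delicate'' cases you flag (general $S^2$ and the symmetric vertex) by invoking \cref{highest-type-no-usage-prices} to set $S^2 = Q$ up front and by bounding the type-1-highest case directly via $V^1(Q)<9M/4$, but the underlying construction and knife-edge argument are the same.
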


To establish this hardness result, in \cref{sec:two-type-formula} we first derive explicit expressions for the profit-maximizing upfront prices for two types given usage prices $x^t_q \in \{0, \infty\}$ for the case where the two types are equally likely. Using these expressions, we then show that the \tsf{Partition} problem can be reduced to an instance of the profit-maximization problem. In particular, we prove that a multiset of integers summing to $M$ can be partitioned into two subsets of equal sum if and only if the maximum seller profit in a specific instance is exactly $\fr{9M}{4}$.

\paragraph{Reducing from \tsf{Partition}.}

To prove \tbf{NP}-hardness of computing the optimal menu for two types, we reduce from the well-known \tsf{Partition} problem, as described below:
\begin{problem}[\tsf{Partition}]
    Given a multiset of integers $\bc{n_1,n_2,\lds, n_k}$ with sum $M = n_1 + n_2 + \cds + n_k$, determine if there exists a subset that sums to $\fr{M}{2}$.
\end{problem}

It is well-known that \tsf{Partition} is \tbf{NP}-hard, for example, see \citet{hayes2002easiest}. Given an instance $\bc{n_1,n_2,\lds,n_k}$ of the partition problem, we construct an instance of the service provider problem as follows:
\begin{itemize}
    \item Let $Q = 0\cup [k]$, $T = [2]$, $\mu^1 = \mu^2 = \fr12$ and $A = \bc{a}$.
    \item Outcome $0$ has valuations $v^1_0 = M(k+1)$ and $v^2_0 = 0$.
    \item For $q\in [k]$, outcome $q$ has valuations $v^1_q = n_q(k+1)$ and $v^2_q = 3n_q(k+1)$.
    \item The single action $a$ has cost $c(a) = 0$ and the transition probabilities to the $k+1$ outcomes are uniform, so $p^a_q = \fr{1}{k+1}$ for all $q$.
\end{itemize}

The maximum possible type 1 revenue is $\sum_q p^a_q v^1_q = 2M$ and the maximum possible type 2 revenue is $\sum_q p^a_q v^2_q = 3M.$ The key claim in the reduction, proven in \cref{a:partition-revenue} is the following:

\begin{claim} \label{partition-revenue}
    There exists a subset of $\bc{n_1,n_2,\lds,n_k}$ in the \tsf{Partition} problem that sums to $\fr{M}{2}$ if and only if the maximum seller profit in the corresponding service provider instance is $\fr{9M}{4}$.
\end{claim}

The claim shows how to reduce the \tsf{Partition} problem to computing the maximum seller profit of an instance of the service provider problem with two buyer types and a single action. Since \tsf{Partition} is \tbf{NP}-hard, computing the maximum seller profit in the service provider problem is also \tbf{NP}-hard. Certainly this also proves that computing a profit-maximizing menu that achieves this maximum profit is also \tbf{NP}-hard since computing the numerical profit that a menu achieves can be done in polynomial time.

% \begin{remark}
%     \citet{bernasconi2024agent} show that in their \emph{mandatory usage} model, a profit-maximizing menu for any number of buyer types can be computed efficiently using a linear program if the seller is allowed to randomize over actions. In stark contrast, \cref{np-hardness-two-types} shows that the \emph{voluntary usage} assumption in our service provider problem makes maximizing seller profit computationally hard even for just two buyer types. This highlights a tradeoff.
% \end{remark}
\subsection{An FPTAS for constant number of types} \label{fptas-section}

As a complement to the hardness result in the preceding subsection, we now consider the problem of approximating the profit-maximizing menu. We establish the following positive result, providing an FPTAS for the setting where the number of buyer types is held constant and when the profit is at least a constant fraction of the cost:

\begin{theorem} \label{fptas}
    Assume that the number of buyer types $T$ is constant and that the maximum seller profit $\Pi_{\tt{opt}}$ is at least a constant fraction $\kappa>0$ of the costliest action performed by the seller in a profit-maximizing menu. Then for any $\eps > 0$ there is an algorithm that runs in time polynomial in $\fr{1}{\eps}$, $\ab{A}$, and $\ab{Q}$ and outputs a menu achieving profit $\Pi \ge (1-\eps)\cd \Pi_{\tt{opt}}$.
\end{theorem}

\begin{remark}
    In the above assumption, the profit needs only be a constant fraction of the costliest action that appears in a profit-maximizing menu as opposed to the costliest action in $A$. This assumption is also reasonable in practice: it is equivalent to the seller's profit margin (ratio of profit to revenue) being bounded away from zero. Finally, we argue why this assumption is necessary for an FPTAS to exist. Consider modifying the problem instance we constructed in \cref{hardness} so that that single action $a$ has cost $c(a) = \fr{9M}{4}$. By \cref{partition-revenue}, $\Pi_{\tt{opt}} = 0$ if and only if there exists a subset in the \tsf{Partition} problem summing to $\fr{M}{2}$, so determining whether $\Pi_{\tt{opt}} = 0$ is $\tbf{NP}$-hard. Any algorithm that outputs a menu achieving $(1-\eps)$-fraction of the maximum seller profit $\Pi_{\tt{opt}}$ necessarily has strictly positive profit when $\Pi_{\tt{opt}} > 0$ and strictly negative profit when $\Pi_{\tt{opt}} < 0$, hence approximating $\Pi_{\tt{opt}}$ is at least as hard as the \tbf{NP}-hard problem of determining whether $\Pi_{\tt{opt}} = 0$.
\end{remark}

\cref{fptas} implies that in situations where the number of buyer types is not too large, the seller can efficiently find a good approximation to the profit-maximizing menu. The rest of this section is devoted to proving \cref{fptas}. We start by fixing the action $a^t$ for each contract $\cC^t$ in the menu and find (approximately) profit-maximizing upfront and usage prices for this fixed mapping of contracts to actions. The overall (approximately) profit-maximizing menu can then be obtained by iterating over all $\ab{A}^T$ such possible mappings and outputting the most profitable among them. Since $T$ is assumed to be fixed, this entire procedure can be completed in polynomial time if each iteration can be. Under the assumption that $\Pi_{\tt{opt}}$ is at least a constant fraction $\kappa>0$ of costliest $c$ incurred by the seller in a profit-maximizing mapping, we claim that to $(1-\eps)$-approximate the maximum profit it suffices to $\bp{1 - \fr{\eps \cd \kappa}{1 + \kappa}}$-approximate the quantity $\Pi'_{\tt{opt}} = \Pi_{\tt{opt}} + c$. Indeed, the profit of a menu that $\bp{1 - \fr{\eps \cd \kappa}{1 + \kappa}}$-approximates $\Pi'_{\tt{opt}}$ is
\begin{align*}
    \Pi &\ge \bp{1 - \fr{\eps \cd \kappa}{1 + \kappa}} \cd \bp{\Pi_{\tt{opt}} + c} - c
    \\&\ge \bp{1 - \fr{\eps \cd \kappa}{1 + \kappa}} \cd \Pi_{\tt{opt}} - \fr{\eps \cd \kappa}{1 + \kappa} \cd c
    \\&\ge \bp{1 - \fr{\eps \cd \kappa}{1 + \kappa}} \cd \Pi_{\tt{opt}} - \fr{\eps \cd \kappa}{1 + \kappa} \cd \fr{\Pi_{\tt{opt}}}{\kappa}
    \\&\ge (1-\eps)\cd \Pi_{\tt{opt}}.
\end{align*}
Letting $\eps' := \fr{\eps \cd \kappa}{1 + \kappa}$, it suffices to construct an algorithm that outputs an menu $\cM$ with profit $\Pi$ satisfying $$\Pi' := \Pi + c \ge (1-\eps') \cd \bp{\Pi_{\tt{opt}} + c}$$ in time polynomial in $\fr{1}{\eps'}$, noting that $\eps' = \Omega(\eps)$. Our reason for adding $c$ to the profit will be apparent later as a condition to ensure nonnegativity of objective values for intermediate solutions of a certain recursive program. Since $\Pi_{\tt{opt}}$ is only assumed to be a constant fraction of $c$ rather than $\max_{a\in A} c(a)$, our algorithm, which recall iterates over all contract to action mappings, is only guaranteed to output a $(1-\eps)$-approximate solution when run on a profit-maximizing mapping. For notational brevity we denote $\Pi' = \Pi + c$ by $\Pi$ and $\eps' = \fr{\eps \cd \kappa}{1 + \kappa}$ by $\eps$ for the rest of the proof.

Recall that by \cref{two-usage-prices-suffice}, a profit-maximizing menu has usage prices $x^t_q\in \bc{0,\infty}$ for each $t \in [T]$ and outcome $q \in Q$. Since there are $2^{T\cd \ab{Q}}$ possible choices for the usage prices, searching over all of them is prohibitive. Instead, our proof exploits the structure of the problem to reduce the search space to polynomial size by skipping over similar choices of the usage prices. In particular, we use a general framework due to \citet{woeginger2000} to obtain an FPTAS for a class of problems with a recursive structure. (Woeginger refers to this class of problems as \emph{dynamic programs}: these are different from and more general than dynamic programming problems. To avoid confusion we refer to these as \emph{recursive programs} in this paper.) We present this general framework in \cref{sec:woeginger}. Next, in \cref{sec:recasting}, we show that the service provider problem can be recast into this framework by appropriately defining the underlying recursive structure. Finally, we show that the recast service provider problem satisfies the technical conditions required in \citet{woeginger2000} to admit an FPTAS.

\subsubsection{A general recursive program framework.} \label{sec:woeginger}

We begin by introducing a recursive program framework by \citet{woeginger2000} that considers optimization problems of the following form: There is a set of (potential) \emph{states} $S \subseteq \mathbb{R}^d$, and a function $\Pi \colon S \to \mathbb{R}$ defined over $S$. In addition, we have the following:

\begin{itemize}
    \item A set $S_0 \subseteq S$ of initial states.
    \item A sequence of inputs $z_1,\lds,z_n \in Z$.
    \item A finite set $\cF$ of \emph{transition functions}, where each function $f\in \cF$ maps a tuple $(s, z) \in S \times Z$ to a new state $f(s, z) \in S$.
\end{itemize}

For $i=1, \lds, n$, define $S_i = \{ f(s, z_i) : s \in S_{i-1}, f \in \cF\}$. The objective of the optimization problem is to compute $\max_{s \in S_n} \Pi(s)$, and to identify a sequence $f_1, \lds, f_n \in \cF$ such that there exists a tuple $(s_0, \lds, s_n)$ with $s_0 \in S_0$ and $s_i = f_i(s_{i-1}, z_i)$ for $i =1, \lds, n$ where $s_n$ achieves the maximum.
The recursive structure of the above problem follows from the fact that each state in $S_i$ encodes a partial solution to the problem using inputs $z_1,\lds,z_i$. Thus, an optimal solution can be built recursively from the partial solutions. In particular, we have
\begin{align*}
    \max_{s \in S_n} \Pi(s) = \max_{f \in \cF} \bc{\max_{s \in S_{n-1}} \Pi( f(s, z_n))}
\end{align*}

% \kidelete{The recursive program (Algorithm \ref{dp}) works in $n$ steps. At each step $i = 1, \lds, n$, it processes the input $z_i$ by applying one of the transition functions in $\cF$ and constructs a set of states $S_i$ using the previous set of states $S_{i-1}$.}\kicomment{I think we do not need to explicitly show Algorithm 1 since it is the definition of the problem. I have encapsulated the Algorithm enviroment in a delete environment, but it is not showing, so stating it explicitly here.}
% \kidelete{\begin{algorithm}[t]
%         \caption{General recursive program}
% 	\label{dp}
%         \SetAlgoNoLine
% 	\KwIn{A sequence $z_1,\lds,z_n$ of length $n$}
% 	Let $S_0 \coloneq $ the set of initial states
	
% 	\For{$i = 1,\lds,n$
% 		}{
% 			Let $S_i \coloneq \bc{f(s,z_i): f\in \cF, s\in S_{i-1}}$
% 		}

%         \KwOut{$\max \bc{\Pi(s):s\in S_n}$}
	
% \end{algorithm}} 

Note that, at this level of generality and without any additional assumptions on the objective function or the transition functions, solving the above problem recursively would require enumerating over all transition functions and the states. Since the number of states is multiplied at every step by the number $|\cF|$ of transition functions, such an approach will run in exponential time in general. In light of this, \citet{woeginger2000} identified three conditions under which the above problem admits an FPTAS. These conditions essentially require a form of continuity of the objective function and the transition functions. To introduce these conditions, we first define a notion of closeness of states:
\begin{definition}
    For a constant $r>1$, two states $s$ and $s'$ are \emph{$r$-close} if for each $i$, $r^{-1} \cd s_i \le s'_i \le r\cd s_i$.
\end{definition}
Using this notion of closeness, we consider the following assumption on the optimization problem:
\begin{assumption}\label{as:woeginger} Suppose the following conditions hold:
\begin{enumerate}
    \item \emph{Closeness preserved by transition functions:} For any transition function $f \in \cF$, input $z_i \in Z$, states $s, s'$, if $s$ and $s'$ are $r$-close then $f(s, z_i)$ and $f(s', z_i)$ are $r$-close.

    \item \emph{Continuity of the objective function:} There exists an integer $G\ge 0$ such that for any $1 < r \le 1 + \frac{1}{G+1}$, if states $s$ and $s'$ are $r$-close then $\Pi(s') \ge r^{-G} \cd \Pi(s)$.

    \item \emph{Computability.} All transition functions $f\in \cF$ and the objective function $\Pi$ can be evaluated in polynomial-time.
\end{enumerate}
\end{assumption}

We then have the following result:
\begin{theorem}[\citet{woeginger2000}] \label{thm:woeginger}
    Suppose \cref{as:woeginger} holds. Then there exists an algorithm $\cA$ that runs in time $\poly\mathopen{}\bp{\fr{1}{\eps}}$ and outputs a sequence $f_1,\lds,f_n$ of transition functions that results in a solution $s^* \in S_n$ such that $\Pi(s^*) \ge (1-\eps)\cd \max_{s\in S_n} \Pi(s)$.
\end{theorem}

The FPTAS in the preceding theorem, displayed as Algorithm \ref{dp-fptas}, is based on the idea of \emph{trimming} the state space. Instead of exploring exponentially many states corresponding to exponentially many choices of transition functions, a recursive program whose goal is to compute a $(1-\eps)$-approximation of the maximum objective value need only store a $\poly\bp{\fr{n}{\eps}}$-sized subset of states at every step by merging state vectors where the individual components of the states are within a $(1\pm \eps)$-multiplicative factor of each other. The objective function's proximity property guarantees that the recursive program outputs a $(1-\eps)$-approximation of the optimal objective value.

% \kidelete{Note that in addition to outputting the optimal objective value, any recursive program can also output a sequence of transition functions that leads to the optimal state by also storing for each state a sequence of transition functions that achieve it.}

\begin{algorithm}[t]
    \caption{Converting recursive program to FPTAS}
	\label{dp-fptas}
    \SetAlgoNoLine
	\KwIn{A sequence $z_1,\lds,z_n$ of length $n$}
    
    \textbf{Parameters:} $\eps$ (the required approximation ratio), $G$ (the integer that governs proximity between states), $M$ (the maximum value that can appear in a component of a state)

    \textbf{Initialization:} Define $r := 1 + \fr{\eps}{2Gn}$ and $L := \ceil{\fr{M}{\log r}}$, noting that all values in a state vector are in the range $[0, r^L]$.

    \begin{enumerate}
        \item Partition the range $[0, r^L]$ into $L+1$ \emph{$r$-intervals} $I_0 = [0], I_1 = [1, r), \lds, I_L = [r^{L-1}, r^L]$.
        \item Partition the state space into \emph{$r$-boxes} where each coordinate is partitioned into some $I_\ell$, noting that if two states are in the same $r$-box they are $r$-close.
        \item Define $R := $ number of $r$-boxes, which is polynomial in the size of the input and in $\fr{1}{\eps}$.
    \end{enumerate}
        
	\textbf{Algorithm:} Let $T_0 \coloneq S_0 \coloneq $ the set of initial states. At each step we trim the state set $S_i$ into a smaller set $T_i$ that contains exactly one state in each $r$-box.
	
	\For{$i = 1,\lds,n$
		}{
			Let $S_i \coloneq \bc{f(s,z_i): f\in \cF, s\in T_{i-1}}$
            
            Let $T_i \coloneq $ a trimmed copy of $S_i$ (for each $r$-box that contains multiple states in $S_i$, keep exactly one state)
		}

        \KwOut{$s^* = \argmax \bc{\Pi(s):s\in T_n}$ and the sequence $f_1,\lds,f_n$ of transition functions achieving $s^*$.}
    \textbf{Runtime:} Polynomial in the number of possible states in each $T_i$, which is at most $R$.
\end{algorithm}

\subsubsection{Recasting the profit-maximization problem.} \label{sec:recasting} 

% Next, we concretize the above proof ideas by first introducing a general recursive program (DP) framework due to   \citet{woeginger2000}, which  casts our problem \kicomment{this clause sounds odd}, then describing the construction of a suitable DP state for our problem, following by proving the seller's profit as a function of the constructed state does satisfy the two desired properties above.   
To cast the service provider's problem in the general framework presented above, we need to identify an appropriate notion of a state vector, an objective function defined at each state vector, and the transition functions. Furthermore, to obtain an FPTAS, we must ensure that the three conditions in \cref{as:woeginger} hold.

\paragraph{States:} Recall that the service provider's problem requires computing both the usage prices $\mathbf{x}^t$ and the upfront prices $w^t$ for each contract $t$ in the menu. We let the state correspond to the collection of usage prices $\mathbf{x}^t$ for each contract $t$; later on, we define the objective function as the maximizing the seller's profit over all upfront prices for a given collection of usage prices. However, to apply \cref{thm:woeginger}, we require a better \emph{state representation}, since the collection of usage prices lies in the set $\{0, \infty\}^{T\cd \ab{Q}}$ . We first prove that the collection of usage prices can be concisely represented by a state vector in $\mathbb{R}^{T^2}$. To do this, we first introduce the following definition:

\begin{definition}[Value of a contract]
    Fix the usage prices $\mathbf{x}^t \in \{0, \infty\}^{\ab{Q}}$ for each contract $t$. The \emph{value} of contract $t'$ for the buyer of type $t$ is defined as 
    \begin{align*}
    V(t;t') \coloneq \sum_{q} p^{a^{t'}}_q v^t_q \cd \one\bb{x^{t'}_q = 0}.
    \end{align*}
\end{definition}

For any choice of the upfront prices $(w^t)_{t \in [T]}$, we have the utility of buyer $t$ for the contract $t'$ equals $U(t;t') = - w^{t'} + V(t;t')$  Thus, the value of a contract can be interpreted as the buyer's utility \emph{ignoring} the upfront price.

Given $U(t;t') = - w^{t'} + V(t;t')$, the IC and the IR constraints can be rewritten respectively as

\begin{align*}
    w^t - w^{t'} &\leq  V(t; t) - V(t; t') \quad \text{for all $t, t'$.}\\
    w^t &\leq V(t; t) \quad \text{for all $t$.}
\end{align*}

Thus, the set of feasible upfront prices depends on the usage prices $(\mathbf{x}^t)_{t \in [T]}$ only via the contract values $(V(t;t')_{t, t' \in [T]}$. Furthermore, the seller's profit from any IC and IR menu depends only on the upfront prices. Together, this implies that the contract values $(V(t;t')_{t\in [T]}$ can be used to concisely represent the usage prices. 

Hereafter, we let $S = \mathbb{R}^{T^2}$ denote the set of (potential) states. Each choice of usage prices $(\mathbf{x}^t)_{t \in [T]}$ corresponds to a state $s \in S$ with 
\begin{align*}
    s_{t, t'} = V(t;t') = \sum_{q} p^{a^{t'}}_q v^t_q \cd \one\bb{x^{t'}_q = 0} \quad \text{for each $t,t'$}.
\end{align*}

\paragraph{Objective function:} Given the definition of the state, a potential choice for the objective function is the optimal seller's profit across all upfront prices for an IC and IR menu with the corresponding contract values. Using the IC and the IR constraints listed above, this objective can be written as the following linear program:

\begin{tcolorbox}[title=Maximizing profit in direct menus fixing actions and values]
\vspace{-1em}
    \begin{align} 
        \Pi_{\tt{direct}}(s) = \max_{(w^t)_{t\in [T]}} &  \bE_{t\sim \mu} \bb{w^t - c(a^t)} + c \label{linear-program-given-usage-prices} \\
        & w^t \le w^{t'} + s_{t,t} - s_{t,t'} & \fl t,t' && \te{(IC)} \notag \\
            & w^t \le s_{t,t}  & \fl t && \te{(IR)} \notag
\end{align}
\end{tcolorbox}

However, the objective function $\Pi_{\tt{direct}}(s)$ in \cref{linear-program-given-usage-prices} is not continuous in $s$. This is because a small change in a value $s_{t, t'}$ could result in the menu of contracts no longer being feasible because of the violation of a IC constraint, so $\Pi_{\tt{direct}}(s)$ could suddenly jump from a finite value to $-\infty$ due to infeasibility. Using $\Pi_{\tt{direct}}$ as the recursive program objective would violate the continuity requirement in \cref{as:woeginger} and hence would fail to yield an FPTAS.

To address this issue, we instead consider the objective at any state to be the maximum seller revenue across all upfront prices for \emph{any} menu with corresponding contract values. In particular, we expand the space of menus to include all indirect menus, by relaxing the IC and IR requirements. While unintuitive,  this modification ensures that the objective function satisfies the continuity requirement in \cref{as:woeginger}, as we prove later in \cref{value-proximity}.

To formally define the objective function, we first develop some notation to capture the opt-out option. Define $s_{t,0} = 0$ and $w^0 = 0$ for all $t \in [T]$, and introduce an action $a^0$ with cost $c(a^0) =0$. In other words, opting-out is equivalent to choosing a contract that has zero value (for all buyer types) and zero upfront payment, and which results in the seller choosing a trivial zero-cost action.  With this notation in place, we define the objective function $\Pi_{\tt{indirect}}(s)$ over any state $s \in S$ as follows:

\begin{tcolorbox}[title=Maximizing profit in indirect menus fixing actions and values]
\vspace{-1em}
    \begin{align*}
             \Pi_{\tt{indirect}}(s) &= \max_{(w^t)_{t\in [T]}}\bE_{t \sim \mu} \left[ w^{u_\mathbf{w}(s,t)} - c(a^{u_\mathbf{w}(s,t)}) \right] + c \\
            &\quad \text{where }  u_\mathbf{w}(s, t) \coloneq  \argmax_{t'\in \bc{0} \cup [T]} s_{t, t'} - w^{t'}
          \end{align*}   
\end{tcolorbox}

In the above formulation, rather than requiring that a buyer of type $t$ choose the contract $\cC^t$ as we do in a direct menu, we allow buyers to choose their utility-maximizing contract from the menu, or opt-out if no contract offers nonnegative utility for the buyer. Here, for any choice of upfront prices $\mathbf{w} = \bp{w^t}_{t\in [T]}$, the expression $u_\mathbf{w}(s,t)\in \bc{0}\cup [T]$ specifies the optimal choice of a buyer of type $t$, when the contract values are given by $s$. In particular, if $u_\mathbf{w}(s,t) = t' \in [T]$, then the contract $\cC^{t'}$ maximizes the utility of a buyer of type $t$, whereas $u_\mathbf{w}(s,t) = 0$ implies that the buyer of type $t$ is better opting-out. As often common in contract-design problems, we assume that the buyer breaks ties in favor of the seller.

Note that in general, the functions $\Pi_{\tt{direct}}(s)$ and $\Pi_{\tt{indirect}}(s)$ take different values for a given state $s$. By restricting attention to upfront prices $\mathbf{w}$ that satisfy the IC and IR conditions, it follows that $\Pi_{\tt{indirect}}(s) \geq \Pi_{\tt{direct}}(s)$. Nevertheless, from a revelation principle argument, it follows that for each $s \in S$, there exists an $s' \in S$ such that $\Pi_{\tt{direct}}(s') = \Pi_{\tt{indirect}}(s)$. Thus, the maximum of the two functions are equals.

%Note that even though $\Pi_{\tt{indirect}}(s)$ and $\Pi_{\tt{direct}}(s)$ both represent the maximum seller profit for a state $s$, these functions are different because the valid states for indirect and menus are different and furthermore the functions $\Pi_{\tt{indirect}}$ and $\Pi_{\tt{direct}}$ that compute the maximum seller profit given a state are different as well. Note that for a given state $s$ achievable by a direct menu, we have $\Pi_{\tt{indirect}}(s) \ge \Pi_{\tt{direct}}(s)$ since indirect menus do not have the IC constraints that direct menus have. Neverthesless, the revelation principle from \cref{model} implies that the maximum of $\Pi_{\tt{indirect}}(s)$ over all states $s$ achievable by an indirect menu is the same as the maximum of $\Pi_{\tt{direct}}(s)$ over all states $s$ achievable by a direct menu.

%The opt-out option is equivalent to always including in the menu a \emph{trivial} contract $\cC^0 = (a^0, w^0, \mbf{\infty})$ that deterministically maps a \emph{trivial} action $a^0$ with zero cost to a \emph{trivial} outcome that has zero value for all types and zero payments. It is straightforward to verify that \cref{two-usage-prices-suffice} is also valid for indirect menus, so we can write down the seller profit function for indirect menus as the following:

Indirect menus do not have the discontinuous IC conditions present in direct menus. That being said, it is not clear that $\Pi_{\tt{indirect}}$ is continuous in the state $s$ because the buyer's optimal response $u_{\mathbf{w}}(s,t)$ can jump between contract. A small change in a value $s_{t,t'}$ could lead to a buyer choosing an entirely different contract with a significantly higher or lower upfront payment. Despite this potential discontinuity in buyer's response, we show that the function $\Pi_{\tt{indirect}}(s)$ satisfies the continuity requirements in \cref{as:woeginger}: 

\begin{lemma}[Continuity of the objective function] \label{value-proximity}
    There exists an integer $G\ge 0$ such that if states $s$ and $s'$ are $r$-close for any $r < 1 + \fr{1}{G+1}$ then $\Pi_{\tt{indirect}}(s') \ge r^{-G}\cd \Pi_{\tt{indirect}}(s)$. 
\end{lemma}

Our proof of \cref{value-proximity} in \cref{a:value-proximity} relies on the following technical result: for two states $s$ and $s'$ that differ in only one component $i = (t, t')$ by a constant $\eps>0$, we have $\Pi_{\tt{indirect}}(s') \ge \Pi_{\tt{indirect}}(s) - \eps$. We also show through an enumeration argument that the computability requirement in \cref{as:woeginger} is also satisfied by $\Pi_{\tt{indirect}}(s)$. The proof is provided in \cref{a:value-efficient}.

\begin{lemma}[Computability] \label{value-efficient}
    When the number of buyer types $T$ is constant, the seller profit $\Pi_{\tt{indirect}}(s)$ is efficiently computable given $s$.
\end{lemma}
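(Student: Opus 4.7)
The plan is to reduce computation of $\pi_{\tt{indirect}}(s)$ to solving a constant number of linear programs, exploiting that $T$ is constant. The key observation is that once we fix which contract each buyer type selects, maximizing profit over upfront prices becomes a linear program in those prices.

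Enumerate all assignments $u : [T] \to \bc{0} \cup [T]$, where $u(t)$ denotes the index of the contract chosen by type $t$ and the index $0$ refers to the trivial opt-out contract (with $w^0 = 0$ and $s_{t,0} = 0$ for all $t$). There are $(T+1)^T$ such assignments, which is a constant. For each $u$, solve the following linear program in the variables $(w^{t'})_{t' \in [T]}$:
\begin{align*}
    \te{maximize} \quad & \sum_{t\in [T]} \mu^t \bp{w^{u(t)} - c(a^{u(t)})} \\
    \te{subject to} \quad & s_{t, u(t)} - w^{u(t)} \ge s_{t, t'} - w^{t'}, \quad \fl t \in [T], \ t' \in \bc{0}\cup [T], \\
    & w^{t'} \ge 0, \quad \fl t' \in [T].
\end{align*}
The IC-style constraints force $u(t)$ to be a utility-maximizing contract for type $t$ given the upfront prices. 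We output the maximum objective value across all feasible LPs.

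Correctness follows from two observations. First, for any price vector $w$, let $u_{s,w}$ be the seller-favorable utility-maximizing assignment; then $w$ is feasible for the LP with $u = u_{s,w}$ and the LP objective equals the indirect menu profit at $w$. Conversely, any $w$ feasible for the LP associated with some $u$ has $u(t)$ among the utility-maximizing contracts for type $t$, so the seller-favorable tie-breaking rule is free to select $u$, which again makes the LP objective match the indirect menu profit at $w$. Hence the maximum across all LPs coincides with $\pi_{\tt{indirect}}(s)$.

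Each LP has $T$ variables and $O(T^2)$ constraints whose coefficients are the values $s_{t,t'}$, priors $\mu^t$, and costs $c(a^{t'})$, all of polynomial bit complexity, and is therefore solvable in polynomial time by any standard LP algorithm. Since there are only $(T+1)^T = O(1)$ LPs in total, the overall runtime is polynomial in the input size. The main subtlety is seller-favorable tie-breaking; this is handled automatically by using weak inequalities in the constraints and taking the maximum over all assignments $u$, which guarantees that ties are resolved in the seller's favor because we select the best profit achievable across all admissible contract allocations.
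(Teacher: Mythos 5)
Your proposal is correct and follows essentially the same route as the paper's proof: enumerate the $(T+1)^T$ (constant, since $T$ is constant) assignments of types to contracts, observe that fixing the assignment turns the IC/IR conditions into linear constraints in the upfront prices, solve one LP per assignment, and take the maximum over feasible LPs. Your additional remarks on seller-favorable tie-breaking via weak inequalities and the outer maximum are a correct elaboration of the same argument.
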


\paragraph{Inputs and transition functions.} Having defined the states and the objective function, we now describe the appropriate notion of \emph{inputs} and \emph{transition functions} to cast our problem into the framework of \citet{woeginger2000}. 

We let the inputs $z_1,\lds,z_n \in Z$ be an arbitrary ordering of the $n = T\cd \ab{Q}$ type and outcome pairs $(t,q)$. The algorithm proceeds as follows: At step $0$, all usage prices $x^t_q$ are set at $\infty$, corresponding to the initial state $s = \zero$. At each step $i \geq 1$, the input $z_i = (t_i, q_i)$ indexes the type-outcome pair whose usage price $x^{t_i}_{q_i}$ will be decided; the decision is either to keep its value at $\infty$ or to reduce it to $0$. Each decision influences the current value of the state, which can be represented in the form of two transition functions $f \in \{f_\infty, f_0\}$. The transition function $f_\infty$ corresponds to the decision of not reducing the usage price from $x^t_q = \infty$; in this case, the state does not change, and hence we have $f_\infty(s, (t,q)) = s$. The transition function $f_0$ corresponds to reducing the usage price to $x^t_q = 0$; in this case, the state updates to $s' = f_0(s,(t,q))$ given by
$$s'_{t',u} = \begin{cases} s_{t',u} + p^{a^u}_q v^{t'}_q & \text{if $u = t$}\\ s_{t',t} & \text{otherwise}, \end{cases} \quad \text{for all $t', u \in [T]$.}$$
The above transition captures the fact that if we reduce the usage price to $x_q^t = 0$, the value of the contract $t$ for each buyer of type $t'$ increases by $p^{a^t}_q v^{t'}_q$ (since under the contract $t$, the buyer will now use the service if the outcome $q$ is realized). When all the inputs are processed, the final state corresponds to the complete assignment of the usage prices for each type-outcome pair.

From the definition, it immediately follows that the closeness of states is preserved by each transition function, as required in \cref{as:woeginger}. 

\begin{lemma}[Closeness of states is preserved by the transition functions]\label{lem:closeness-states}
     For any input $z_i$, and states $s, s'$, if $s$ and $s'$ are $r$-close for some $r \geq 1$, then $f(s, z_i)$ and $f(s', z_i)$ are $r$-close for each $f \in \{f_\infty, f_0\}$.
\end{lemma}

\begin{proof}
    Both transition functions $f_\infty$ and $f_0$ increment the state by a vector with nonnegative entries that does not depend on the current state, and $r^{-1} \cd s_i \le s'_i \le r\cd s_i$ implies $r^{-1} \cd (s_i + \Delta) \le s'_i + \Delta \le r\cd (s_i + \Delta)$ for any $\Delta \ge 0$ and $r \geq 1$. 
\end{proof}

Taken together, \cref{value-proximity}, \cref{value-efficient}, \cref{lem:closeness-states} imply that all the conditions in \cref{as:woeginger} hold, and thus, \cref{dp-fptas} provides an FPTAS to our problem. This completes the proof of \cref{fptas}.

\section{The single-parameter setting} \label{single-parameter}

This section examines a fundamental special case of the general service provider problem in which each buyer type's valuation vector can be parametrized by single real-valued number. Such single-parameter preferences are well-studied in contract design problems \citep{alon2021contracts,li2022selling,chen2015complexity} as well as in broader economic settings \citep{mussa1978monopoly}.

\begin{definition}[Single-parameter setting] \label{single-parameter-definition}
 Buyer types are \emph{single-parameter} if each buyer type $t$'s valuation vector is characterized by a single parameter $\alpha^t>0$  such that $v^t_q = \alpha^t \cd v(q)$ for every $q$, where $v(q) \in \bR_{\ge 0}$ is the \emph{baseline} valuation for outcome $q$. 
 \end{definition}

Since buyers with the same single parameter $\alpha$ have the same valuations and hence can be merged into one type, we can without loss of generality label the buyer types so that $\alpha^1 < \alpha^2 < \cds < \alpha^T$. The main result of this section is that in the single-parameter setting, the maximum seller \emph{revenue} can be achieved by a single contract for all buyer types:

\begin{theorem} \label{single-parameter-revenue}
    The single-parameter setting admits a \emph{revenue}-maximizing menu that consists of a single contract for all buyer types. Moreover, this single contract can be computed in polynomial time in $T$,  $\ab{A}$, and $\ab{Q}$.
\end{theorem}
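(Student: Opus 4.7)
The plan is to reduce revenue maximization to a one-dimensional Myerson-style virtual value problem and then show that the constrained optimum already has a threshold form, which can be implemented by a single take-it-or-leave-it contract.

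First I would apply \cref{two-usage-prices-suffice} to assume $x^t_q\in \bc{0,\infty}$. In the single-parameter setting, the buyer value of a contract $\cC = (a, w, \mbf{x})$ factors as $V(t;\cC) = \alpha^t \cd V(\cC)$, where the type-independent \emph{baseline value} $V(\cC) \coloneq \sum_q p^a_q v(q) \one\bb{x_q = 0}$ depends only on the contract. Hence $U(t;\cC) = \alpha^t V(\cC) - w$ is linear in $\alpha^t$, which is the classical single-crossing setup, and a short exchange argument on IC between any two types shows that in any direct IC menu $\bp{\cC^t}_t$ the baseline values $V^t \coloneq V(\cC^t)$ are weakly increasing in $t$.

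Next I would apply the standard envelope argument. The downward IC inequalities imply $u^t \ge u^{t-1} + (\alpha^t - \alpha^{t-1})V^{t-1}$ for $u^t \coloneq U(t;\cC^t)$, so iterating from the IR-binding base $u^1 \ge 0$ yields $u^t \ge \sum_{k<t}(\alpha^{k+1}-\alpha^k)V^k$. Substituting into $R = \sum_t \mu^t (\alpha^t V^t - u^t)$ and swapping the order of summation gives the Myerson-style bound $R \le \sum_s \phi^s V^s$ with virtual value $\phi^s \coloneq \mu^s \alpha^s - (\alpha^{s+1} - \alpha^s)\mu^{>s}$ (the second term vanishing at $s=T$). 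The key calculation is an Abel summation on $(\phi^s)$: a direct double-sum swap in $\Phi^s \coloneq \sum_{t\ge s}\phi^t$ telescopes the $\mu^k \alpha^k$ terms and produces the clean product $\Phi^s = \alpha^s \mu^{\ge s}$. Therefore
\begin{equation*}
    R \;\le\; \sum_{s=1}^T (V^s - V^{s-1}) \, \alpha^s \mu^{\ge s} \;\le\; V^* \cd \max_{s\in[T]} \alpha^s \mu^{\ge s},
\end{equation*}
where $V^* \coloneq \max_a \sum_q p^a_q v(q)$ and I used $V^s - V^{s-1} \ge 0$ together with $V^T \le V^*$.

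Finally I would show this upper bound is attained by a single contract. Letting $a^* \in \argmax_a \sum_q p^a_q v(q)$ and $s^* \in \argmax_s \alpha^s \mu^{\ge s}$, the menu $\cM = \bc{\cC^*}$ with $\cC^* = (a^*,\, \alpha^{s^*} V^*,\, \mbf{0})$ is accepted by type $t$ iff $\alpha^t V^* \ge \alpha^{s^*} V^*$, equivalently $t \ge s^*$; realized revenue is then exactly $\alpha^{s^*} V^* \mu^{\ge s^*}$, matching the upper bound. Both $a^*$ and $s^*$ are found by direct enumeration in time polynomial in $T$, $\ab{A}$, and $\ab{Q}$. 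The main conceptual hurdle is the identity $\Phi^s = \alpha^s \mu^{\ge s}$: without this clean factorization the $\phi^s$ could have mixed signs, forcing the monotone-constrained optimum over $(V^s)$ to use several distinct nonzero levels; the product form is what concentrates all increment mass at a single threshold, which is exactly what a single contract implements.
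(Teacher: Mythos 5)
Your proposal is correct and follows essentially the same route as the paper: both establish monotonicity of the (normalized) contract values from IC, express revenue as $\sum_t \phi^t V^t$ — your virtual value $\phi^s = \mu^s\alpha^s - (\alpha^{s+1}-\alpha^s)\mu^{>s}$ is literally the paper's LP objective coefficient in its variable $V^t/\alpha^t$ — and then observe that the monotone-constrained optimum is a threshold, implementable by a single contract offering the baseline-value-maximizing action at price $\alpha^{s^*}V^*$. The only difference is presentational: where the paper invokes "extreme points of the LP are step functions," you achieve the same conclusion via Abel summation and the factorization $\Phi^s = \alpha^s\mu^{\ge s}$, which makes the threshold structure explicit and is arguably a cleaner way to package the same calculation.
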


Observe that our proof of the \tbf{NP}-hardness of computing profit-maximizing menus (\cref{np-hardness-two-types}) involves a reduction of \tsf{Partition} to a service provider problem where the costs are zero, implying that in general, computing a revenue-maximizing menu is also \tbf{NP}-hard. Thus, the preceding result implies that the single-parameter assumption brings a computational advantage. Observe that if all actions have the same costs, the above result trivially implies that profit-maximizing menu can also be computed efficiently. However, this result fails to hold if different actions incur different costs. To show this, in \cref{a:no-single-contract-example} we construct a problem instance where actions have heterogeneous costs in which the profit of any menu with a single contract is strictly lower than the optimal profit.

In the remainder of this section, we prove \cref{single-parameter-revenue}. Our proof involves the following steps:
First, we show that any IC direct menu of contracts satisfies a monotonicity condition on the values $V(t;t)$ of contract $\cC^t$ for type $t$. We then use this characterization to derive exact formulas for the revenue-maximizing upfront prices in terms of the contract values. Third, we use the fact that these formulas are \emph{linear} in the contract values to relax the revenue-maximization problem in the single-parameter setting into a linear program in the buyer values. Finally, we show that any extreme point of the feasible region of this relaxed linear program is achievable by a menu consisting of a single contract. Since a linear program is maximized at an extreme point, we conclude that a single contract is revenue-maximizing. All missing proofs in this section are provided in \cref{a:higher-types-higher-values}.
 
\paragraph{Notation.} Define $S^t \coloneq \{q : x^t_q = 0\}$ and recall that $V(t;u) = \sum_{q\in S^u} p^{a^u}_q v^t_q$ is the value of contract $\cC^u$ for type $t$. For notational brevity, we define $V^t \coloneq V(t;t)$. The single-parameter assumption implies that $$V(t;u) = \sum_{q\in S^u} p^{a^u}_q v^t_q = \frac{\alpha^t}{\alpha^u} \cd \left(\sum_{q\in S^u} p^{a^u}_q v^u_q\right) = \frac{\alpha^t}{\alpha^u} \cd  V^u$$ and hence
\begin{align*}
    U(t;u) = V(t;u) - w^u = \fr{\alpha^t}{\alpha^u} \cd V^u - w^u.
\end{align*}
From an argument similar to the proof of the Myerson's Lemma, we obtain the following monotonicity result:

\begin{lemma} \label{higher-types-higher-values} 
    In any IC menu of contracts, we have $\fr{V^t}{\alpha^t} \geq \fr{V^u}{\alpha^u}$ and $w^t \geq w^u$ for all buyer types $t,u$ satisfying $t \ge u$.
\end{lemma}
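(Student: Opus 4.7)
The natural approach is the standard single-crossing style argument: write down the two IC constraints between the pair of types $t$ and $u$, substitute the single-parameter identity $V(t;u) = \frac{\alpha^t}{\alpha^u} V^u$, and combine. Concretely, the IC constraints $U(t;t) \ge U(t;u)$ and $U(u;u) \ge U(u;t)$ become
\begin{align*}
    V^t - w^t &\ge \tfrac{\alpha^t}{\alpha^u} V^u - w^u, \\
    V^u - w^u &\ge \tfrac{\alpha^u}{\alpha^t} V^t - w^t.
\end{align*}
Rearranging these to sandwich $w^t - w^u$ gives
\[
    V^t - \tfrac{\alpha^t}{\alpha^u} V^u \;\ge\; w^t - w^u \;\ge\; \tfrac{\alpha^u}{\alpha^t} V^t - V^u.
\]

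For the value claim, I would chain the two ends of this sandwich: $V^t - \tfrac{\alpha^t}{\alpha^u} V^u \ge \tfrac{\alpha^u}{\alpha^t} V^t - V^u$, which rearranges to $V^t \cdot \frac{\alpha^t - \alpha^u}{\alpha^t} \ge V^u \cdot \frac{\alpha^t - \alpha^u}{\alpha^u}$. Since $t \ge u$ implies $\alpha^t \ge \alpha^u$, the common factor $\alpha^t - \alpha^u$ is nonnegative and can be cancelled (the case $\alpha^t = \alpha^u$ is trivial since types with equal single-parameters can be merged), yielding $\frac{V^t}{\alpha^t} \ge \frac{V^u}{\alpha^u}$.

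For the revenue claim, I would just plug the value inequality into the right-hand bound of the sandwich: $w^t - w^u \ge \tfrac{\alpha^u}{\alpha^t} V^t - V^u = \alpha^u \bp{\tfrac{V^t}{\alpha^t} - \tfrac{V^u}{\alpha^u}} \ge 0$. This proof is essentially a two-line manipulation; there is no real obstacle because the single-parameter structure collapses the IC constraints into a clean one-dimensional ordering. If anything, the only subtlety worth flagging is the implicit assumption that the $\alpha^t$'s are strictly increasing after merging types, which legitimizes dividing by $\alpha^t - \alpha^u > 0$.
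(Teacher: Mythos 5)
Your proof is correct and follows essentially the same route as the paper: both write down the two IC constraints between types $t$ and $u$, derive the sandwich $\alpha^u\bigl(\tfrac{V^t}{\alpha^t}-\tfrac{V^u}{\alpha^u}\bigr) \le w^t - w^u \le \alpha^t\bigl(\tfrac{V^t}{\alpha^t}-\tfrac{V^u}{\alpha^u}\bigr)$, chain the two ends to obtain $(\alpha^t-\alpha^u)\bigl(\tfrac{V^t}{\alpha^t}-\tfrac{V^u}{\alpha^u}\bigr)\ge 0$, and read off both claims. The only cosmetic difference is that you cancel the common factor $\alpha^t-\alpha^u$ explicitly rather than leaving the inequality in product form, but the argument is the same.
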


Using this result, we derive expressions for the revenue-maximizing upfront prices for any given contract values, which in turn allows us to express the express the seller's revenue purely in terms of the contract values:

\begin{lemma} \label{revenue-maximizing-upfront-prices}
For given contract values $\bp{V^t}_{t\in [T]}$, the revenue-maximizing upfront prices are given by

\begin{align*}
    w^t =  V^t  - \sum_{i=1}^{t-1} \left( \alpha^{i+1} - \alpha^i\right) \frac{V^i}{\alpha^i} \quad \text{for all $t$}.
\end{align*}

The corresponding seller revenue is given by

\begin{align}
    \sum_{t\in [T]} \mu^t \cd w^t = \sum_{t\in [T]} \left(\alpha^t \mu^t -  \left(\alpha^{t+1} - \alpha^t\right) \cd \sum_{i=t+1}^T  \mu^i  \right) \frac{V^t}{\alpha^t}.
\end{align}

\end{lemma}

Having expressed the seller's revenue solely in terms of the contract values, we can now optimize over all valid contract values to maximize the revenue. However, doing this directly involves identifying the usage prices for each contract (which can take values in $\{0, \infty\}$), which can be challenging. Instead, we pose a linear programming relaxation to this problem, and then show that this relaxation is tight. Toward that goal, we first define 
$$M \coloneq \max_{a\in A} \sum_q p^a_q \cd v(q),$$ and note that 

\begin{equation} \label{eq:Vt-bound}
V^t =\sum_{q\in S^t} p^{a^t}_q v^t_q  \le  \alpha^t \cd \bp{\max_{a\in A} \sum_q p^a_q \cd v(q)} = \alpha^t \cd M.
\end{equation}

Using this, we define the following linear program:
\begin{tcolorbox}[title=Linear program relaxation for maximum revenue]
    \begin{align}
    \Pi_{\tt{relax}} := \max_y & \sum_{t\in [T]} \left(\alpha^t \mu^t -  \left(\alpha^{t+1} - \alpha^t\right) \cd \sum_{i=t+1}^T  \mu^i  \right) y^t   \label{relaxation}\\
    &  y^t \geq y^{t-1}, &&  \fl t  \notag\\
    & 0\le y^t \le M, && \fl t \notag
\end{align}
\end{tcolorbox}
Let $(V^t)_{t \in [T]}$ denote the contract values for an IC direct menu $\cM$, and define $y^t = \fr{V^t}{\alpha^t}$ for each $t$. Since $V^t \leq \alpha^t \cd M$, we see that $0 \leq y^t \leq M$, and from \cref{higher-types-higher-values}, we obtain that $y^t \geq y^{t-1}$. Thus, we obtain that all valid contract values are feasible for the linear program. From \cref{revenue-maximizing-upfront-prices}, the objective of the linear program is the same as the seller's revenue under the revenue-maximizing upfront prices. Thus, we conclude that the linear program is indeed a relaxation, and we have $\Pi_{\tt{relax}} \geq \Pi_{\tt{opt}}$. 

Note that not all feasible values $(y^t)_{t \in [T]}$ are achievable by the contract values of an IC menu. Nevertheless, we show that the optimal value of the linear program is indeed achieved by valid contract values, and hence $\Pi_{\tt{relax}} = \Pi_{\tt{opt}}$. This is because, the optimal value of the linear program \cref{relaxation} is achieved at some extreme point $(y^t)_{t \in [T]}$ of feasible region, which must of the form $y^t = 0$ for all $t < t^*$  and $y^t = M$ for all $t \geq t^*$ for some $t^* \in [T]$. This feasible point corresponds to values $V^t = \alpha^t y^t$, which can be achieved by offering a single contract $$\cC^{t^*} = \bp{\argmax_{a\in A} \sum_q p^a_q \cd v(q), M \alpha^{t^*}, \mathbf{0}}$$ to all buyer types. Indeed, buyer types $t$ with $t<t^*$ will opt-out of the mechanism and buyer types with $t \ge t^*$ will choose $\cC^{t^*}$.

To see that the revenue-maximizing single contract can be computed in polynomial time, we can iterate over all $t^*\in [T]$, compute the revenue of the single-contract menu $\cM^{t^*} = \bc{\cC^{t^*}}$, and output the menu $\cM^{t^*}$ that yields the highest revenue. This completes the proof of \cref{single-parameter-revenue}.
\section{Conclusion and open problems}

In this paper we describe a contract-based mechanism for selling a service that produces an outcome of uncertain quality. We believe that our modeling assumptions such as voluntary usage are closely aligned with how machine learning training providers sell their services. We show that the two-part payment scheme is rich enough to significantly increase seller profit compared to other natural payment structures, but not so complex as to make it intractable to solve for an approximately profit-maximizing menu for a constant number of buyer types. There are a few interesting future directions left open by our research.

\paragraph{Optimizing menus of constant size.} It would be interesting to study the computational complexity of computing a revenue-maximizing or profit-maximizing menu that consists of a constant $k$ number of contracts in the general service provider problem. This problem falls into a classic area of contract design that explores the trade-off between \emph{optimal} contracts and \emph{simple} contracts \citep{dutting2019simple, guruganesh2023menus, dutting2024algorithmic}. In many principal-agent problems, computing the optimal contract or menu of contracts in general is known to be \tbf{NP}-hard but computing a \emph{simple} menu, for example a linear contract or a menu with a constant number of contracts, can be done in polynomial time \citep{guruganesh2021contracts, castiglioni2022bayesian}.

In the context of our service provider problem, a natural definition of a \emph{simple} menu is one that consists of a small number of distinct contracts. Empirical studies have shown that customers may be negatively biased when presented with large sets of options, leading to reduced purchase behavior \citep{thaler2015misbehaving}. \citet{bernasconi2024agent} show that in their hidden action model, computing a profit-maximizing menu consisting of a constant $k$ number of contracts can be done in polynomial time. In our model the complexity of computing a profit-maximizing menu consisting of even a single contract is an open question, although we note that our recursive program framework in \cref{fptas-section} provides an FPTAS to \emph{approximate} the maximum seller profit achievable using a menu with $k$ contracts when both $k$ and the number of buyer types $T$ is constant. Motivated by our result that single-contract menus are revenue-optimal in the single-parameter setting (\cref{single-parameter-revenue}), we can ask in what other settings is a single contract approximately revenue- or profit-optimal?

%  On the other hand, in our model it is an open problem to efficiently compute an exact profit-maximizing menu even consisting of a single contract, and this problem is open even in the single-parameter setting.
% \footnote{Note that \cref{single-parameter-revenue} shows that we can efficiently compute a \emph{revenue}-maximizing single-contract menu in the single-parameter setting.}

% We remark that if we are satisfied with \emph{approximating} the maximum profit of a single contract, a simplification of our FPTAS framework in \cref{fptas} provides a way to do so. \todo{Describe}

\paragraph{Optimizing menus when the seller cannot commit.}
%  in part because large-scale machine learning platforms possess such commitment power due to government regulation and monitoring\kicomment{seems weak reason, also needs citation if true}, and in part 
This paper assumes that the service provider can commit to actions, in part because automated machine learning platforms are regulated and in part because committing always leads to weakly higher profit for the provider. In contrast, \citet{bernasconi2024agent} adopt a model in which the seller action is \emph{hidden} and the buyer cannot assume that the seller will always perform the action specified by the contract. 

Mathematically, the inability to commit (and hidden actions) adds an additional IC constraint in the service provider problem: seller profit from their chosen action must be least as high as if they had switched actions. Since this constraint is linear in the upfront and usage prices, under \emph{mandatory usage} we can prove that the optimal menu can be computed efficiently by a linear program even with this seller IC constraint (this follows from \cref{voluntary-usage-subsumes}). However, the presence of this constraint creates challenges under \emph{voluntary usage}. Our proof of \textbf{NP}-hardness (\cref{np-hardness-two-types}) also shows that computing an optimal contract under hidden actions is \textbf{NP}-hard. This is because we construct a problem instance with a single action, implying that there is no other action to switch to. Unfortunately, our FPTAS result does not extend to this setting, as our $\{0,\infty\}$-usage price characterization result (\cref{two-usage-prices-suffice}) no longer holds. To see why, note that under committed actions the seller can redistribute payments between upfront and usage without changing seller or buyer behavior by \cref{two-usage-prices-suffice}. However, without commitment, if the usage payments are transferred upfront, the seller will simply perform the lowest-cost action. It would be interesting to see if some of our results, either on the necessity of a two-part payment scheme or on the computational complexity of computing an optimal menu, extend to settings where the seller cannot commit.

\newpage
\bibliographystyle{plainnat}
\bibliography{references}

\newpage
\appendix
\section{Supplemental proofs} \label{a}

\subsection{Necessity of two-part tariffs and voluntary usage} \label{a:H_mu} \label{a:usage-payment-only} \label{a:usage-payment-lottery-pricing}

% We derive the range of possible values of the multiplicative gap $H_\mu$ in \cref{upfront-payment-only}.

\begin{lemma} \label{H_mu}
    The factor $H_\mu$ always lies in the range $[H_T, T)$, where $H_T = \sum_{t\in [T]} \fr{1}{t}$ is the $T$-th harmonic number, and furthermore this range for $H_\mu$ is tight.
\end{lemma}

\begin{proof}
    % For any buyer type distribution $0<\mu^1\le \cds \le \mu^T \le 1$ satisfying $\sum_{t\in [T]} \mu^t = 1$, we prove that $$H_\mu = \sum_{t\in [T]} \fr{\mu^t}{\sum_{i=1}^t \mu^i}$$ defined in \cref{eq:H-mu} satisfies $H_\mu \in [H_T, T)$ where $H_T = \sum_{t\in [T]} \fr{1}{t}$ is the $T$-th harmonic number. We also show that the lower and upper bounds are both tight.

    Without loss of generality, we assume $\mu^1 \leq \mu^2 \leq \cds\leq \mu^T$. If $\mu^1 > 0$,  we obtain
    \begin{align*}
        H_\mu = \sum_{t\in [T]} \frac{\mu^t}{\sum_{i=1}^t \mu^i}.
    \end{align*}
    If any $\mu^t = 0$, then we interpret $\frac{\mu^\tau}{\sum_{i=1}^\tau \mu^i}$ as $1$ for each $\tau \leq t$, and formally extend the definition of $H_\mu$ to all distributions $\mu$ on $[T]$ satisfying $0 \leq \mu^1 \leq \mu^2 \leq \cds\leq \mu^T$. Our goal is then to show that $H_\mu \in [H_T, T)$ for all such distributions.
    
    The upper bound $T$ follows trivially since each term is at most $1$. The upper bound can be achieved in the limit by setting $\mu^t =  \frac{\eps^{T-t} (1- \eps)}{1 - \eps^{T}}$ with $\eps \to 0$. Note that each term $$\frac{\mu^t}{\sum_{i = 1}^t \mu^i} = \frac{\eps^{T-t}}{\sum_{i=1}^t \eps^{T-i}} =  \frac{1 - \eps}{1 - \eps^{t}} \geq 1-\eps,$$ and hence $H_\mu \geq T(1-\eps) \to T$  as $\eps\to 0$. (Alternatively, the upper bound is attained by the distribution $\mu$ with $\mu^T=1$, under our extended definition.)

    We next show $H_\mu  \geq H_T$ using an induction argument. The inequality holds trivially when $T=1$. Suppose $T \geq 2$, and that the inequality holds for any distribution $\mu'$ on $[T-1]$. If $\mu^T = 1$, then $H_\mu = T \geq H_T$. On the other hand, if $\mu^T < 1$, then define $\bar{\mu}^t =  \fr{\mu^t}{1-\mu^T}$ for $t < T$. Note that $\bar{\mu}$ is a distribution on $[T-1]$. We have
    \begin{align*}
        H_\mu &= \sum_{t=1}^{T-1} \frac{\mu^t}{\sum_{i=1}^t \mu^i} + \mu^T\\
        &= \sum_{t=1}^{T-1} \frac{\bar{\mu}^t}{\sum_{i=1}^t \bar{\mu}^i} + \mu^T\\
        &\geq H_{T-1} + \mu^T\\
        &\geq H_{T-1} + \frac{1}{T}\\
        &= H_T.
    \end{align*}
    Here, the first equality follows from the definition and the second equality follows after dividing the numerator and the denominator of each term in the summation by $1-\mu^T > 0$. The first inequality holds by induction hypothesis, and the second inequality follows from the fact that $\mu^T = \max_t \mu^t \geq  \frac{1}{T}$. The last equality follows from the definition of the harmonic number. This completes the induction argument. Finally, it is easy to see that the lower bound is achieved  by setting $\mu^1 = \cdots = \mu^T = \frac{1}{T}$. 
\end{proof}

\subsubsection{Multiplicative loss in profit using only usage payments} 

\begin{proof}[Proof of \cref{usage-payment-only}]

    To show that $\fr{\Pi}{\Rusage} \ge \fr32$, we construct the following problem instance:
    \begin{itemize}
        \item Let $T = 2$, $\mu^1 = \mu^2 = \fr12$, and $Q = \bc{1, 2}$. There is a single action $a$ with cost $c(a) = 0$ and transition probabilities $p^a_q = \fr{1}{2},\fl q\in [2]$.
        \item Valuations are given by $\mathbf{v}^1 = \bp{1,\fr12}$ and $\mathbf{v}^2 = \bp{\fr12, 1}$.
    \end{itemize}

    Note that the utility of action $a$ for both types is $\fr34$, so $\Pi \le \fr34$. The upper bound on $\Pi$ can be achieved using the contract $\cC = \bp{a, \fr34, \mathbf{0}}$ for all types. On the other hand, we show that $\Rusage\le \fr12$. Assume for contradiction that $\Rusage > \fr12$. Then the revenue from some type is greater than $\fr12$; without loss of generality, assume this is type $1$. Since the revenue from type $1$ equals $$\frac{1}{2} \left( x_1^1 \cd \one\bb{ x_1^1 \leq 1} + x_2^1 \cd \one\bb{ x_2^1 \leq \frac{1}{2}} \right),$$ this implies that $0 \leq x_2^1 \leq \frac{1}{2}$, $x_1^1 + x_2^1 > 1$ and $\frac{1}{2} < x_1^1 \leq 1$.
    
    Note that, by incentive compatibility, $$\fr12 \bp{\fr32 - x^1_1 - x^1_2} = U(1;\cC^1) \ge U(1;\cC^2) \ge \fr12 \bp{1 - x^2_1},$$ which upon rearranging yields $x_1^2 \geq x_1^1 + x_2^1 - \frac12$. From $x_1^1 + x_2^1 > 1$, we then obtain $x_1^2 > \frac12$, implying that type $2$ does not use outcome $1$. If type $2$ does not use outcome $2$ as well, then we obtain $$\Rusage = \frac14 (x_1^1 + x_1^2) \leq \frac38 \leq \frac12,$$ a contradiction. On the other hand, if type $2$ uses outcome $2$, then we have 
    $$\fr12 \bp{1 - x^2_2} = U(2;\cC^2) \ge U(2;\cC^1) = \fr12 \bp{1 - x^1_2},$$ which implies $x^2_2 \le x^1_2 \le \fr12$. This yields $$\Rusage = \fr14\bp{x^1_1 + x^1_2 +  x_2^2} \le \frac14\bp{   \frac32+  \fr12} = \fr12,$$ a contradiction. We conclude that $\Rusage \le \fr12$. 
\end{proof}

\subsubsection{Usage payments are redundant in lottery pricing} 

    \begin{proposition} \label{usage-payment-lottery-pricing}
    In the lottery pricing problem, any menu with item-dependent payments can be modified into one with only upfront payments such that the buyer utilities and seller revenue remain unchanged.
\end{proposition}

At a very high-level, \cref{usage-payment-lottery-pricing} is true because the availability of all outcome distributions implies that to price discriminate between buyer types, the seller can choose actions that lead to different desired outcome distributions for different types. The power of being able to induce any distribution turns out to subsume the power of setting outcome-dependent usage prices. We now give the formal proof.

\begin{proof}[Proof of \cref{usage-payment-lottery-pricing}]
Because there is an action for each outcome distribution $\mathbf{p}$ in the lottery problem, specifying the action in a contract is equivalent to specifying the outcome distribution. We show that usage prices can be \emph{redistributed} into the lottery price. Let the menu consist of lotteries $(w^t, \mathbf{p}^t, \mathbf{x}^t)$ where $w^t$ is the lottery price, $p^t_q$ is the probability of receiving item $q$, and $x^t_q$ is the price of item $q$.

\paragraph{Mandatory usage.} We first consider mandatory usage in which the buyer, after being presented with item $q$ with probability $p^t_q$, is required to accept it at price $x^t_q$. Buyer type $t$ solves the optimization problem $$\max_{t\in [T]} -w^t+ \sum_{q\in Q} p^t_q \bp{v^t_q - x^t_q}$$ and the seller revenue from type $t$ is $$w^t + \sum_{q\in Q} p^t_q x^t_q.$$ We show how to construct an equivalent menu that does not have usage payments. By setting usage prices as $\bar{x}^t_q = 0,\fl q$ and the lottery payment $\bar{w}^t = w^t+ \sum_{q\in Q} p^t_q x^t_q$, we observe that the buyer's utility for each contract remains the same, hence type $t$ still chooses the same lottery, implying the seller revenue from type $t$ is unchanged as well.

\paragraph{Voluntary usage.} In voluntary usage, the buyer, after being presented with item $q$ with probability $p^t_q$, can decide whether or not to accept the item at price $x^t_q$. Buyer type $t$ solves the optimization problem $$\max_{t\in [T]} -w^t + \sum_{q\in Q} p^t_q \cdot \max\bc{0, v^t_q - x^t_q},$$ and the seller revenue from type $t$ is $$w^t + \sum_{q\in Q} p^t_q x^t_q \cdot \one\bb{v^t_q \ge x^t_q}.$$ We show how to construct an equivalent menu that does not have usage payments. Recall that by the opt-out assumption, there is a trivial item, which we label 0, such that $v^t_0,\fl t\in [T]$. For each $t\in [T]$ replace the lottery $(\mathbf{p}^t, w^t, \mathbf{x}^t)$ with $(\mathbf{p'}^t, {w'}^t, \mathbf{0})$ where
\begin{align*}
        {w'}^t &= w^t + \sum_{q: v^t_q \ge x^t_q} p^t_q x^t_q \\
        {p'}^t_q &= \begin{cases} p^t_q & \teif v^t_q \ge x^t_q \text{ and $q \neq 0$}  \\ 0 & \teif 0 < v^t_q < x^t_q \text{ and $q \neq 0$}  \\ 1 - \sum_{q: v^t_q \ge x^t_q} p^t_q & \teif q = 0. \end{cases}
\end{align*}
In other words, for all items where the usage price is higher than type $t$'s valuation and therefore does not factor into the revenue from type $t$, we instead map the probability mass to the trivial item. Under this mapping, type $t$'s utility in the modified lottery is the same as in the original one. Furthermore, the modified lottery is weakly worse for all other types because probability mass has been moved to the trivial item, which has zero value for all types. Hence the modified menu remains IC and yields the same buyer utilities and seller revenue.
\end{proof}

% Interestingly, we show that the optimal service provider profit achievable in the \emph{mandatory usage} model of \citet{bernasconi2024agent} with service profit commitment power is exactly the optimal service provider profit achievable in our \emph{voluntary usage} model \emph{using only training payments}, showing that there is a significant profit increase when we allow voluntary usage as opposed to mandatory usage of the outcome. 

\subsection{Computing profit-maximizing menus.} \label{a:two-usage-prices-suffice} \label{a:upfront-price-formula-two-types} \label{a:partition-revenue} \label{a:value-proximity} \label{a:value-efficient}

\subsubsection{Two usage prices suffice} 

\begin{proof}[Proof of \cref{two-usage-prices-suffice}]
    
    Consider a IC menu of the form $\cC^t = (a^t, w^t, \mathbf{x}^t)$. The utility of contract $\cC^t$ for type $u$ is
    \begin{align*}
        U(u;\cC^t) &:= -w^t + \sum_{q} p^{a^t}_q \cd \max\bc{0, v^u_q - x^t_q}
        \\&= -w^t + \sum_{q: v^u_q \ge x^t_q} p^{a^t}_q \bp{v^u_q - x^t_q}.
    \end{align*}
    The revenue from type $t$ is
    \[w^t + \sum_{q\in S^t} p^{a^t}_q x^t_q,\] recalling that $S^t := \bc{q: v^t_q \ge x^t_q}$ is the set of outcomes that type $t$ accepts.
    Consider replacing $\cC^t$ with the contract
    \begin{align*}
        \cC'^t &= \bp{a'^t, {w'}^t, \mathbf{x'}^t} \\
        a'^t &= a^t \\
        w'^t &= w^t + \sum_{q\in S^t} p^{a^t}_q x^t_q \\
        \mathbf{x'}^t_q &= \case{0 & q\in S^t \\ \infty & \teoth.}
    \end{align*}
    We compute
    \begin{align*}
        U(u;\cC'^t) &= -w'^t + \sum_{q\in S^t} p^{a^t}_q v^u_q
        \\& = -w^t + \sum_{q\in S^t} p^{a^t}_q \bp{v^u_q - x^t_q}.
    \end{align*}
    Note that $U(t; \cC'^t) = U(t; \cC^t)$, so type $t$ prefers $\cC'^t$ to any other contract $\cC^u$ by IC of the original menu. The profit from $\cC'^t$ is exactly the profit from $\cC^t$ since the upfront price is increased by exactly the amount that the type $t$ usage prices are decreased, weighted by the outcome probabilities. The profit from all other types is unchanged because only $\cC^t$ is modified to get from the original menu to the modified menu.

    We finish by showing IC of the modified menu using IC of the original menu. It suffices to show that $U(u;\cC'^t) \le U(u; \cC^t)$ for all types $u\neq t$, which implies that type $u$ continues to choose $\cC^u$ in the modified menu. For each outcome $q$, we compare the corresponding term in the formulas for $$U(u;\cC'^t) = -w^t + \sum_{q\in S^t} p^{a^t}_q \bp{v^u_q - x^t_q}$$ and $$U(u;\cC^t) = -w^t + \sum_{q: v^u_q \ge x^t_q} p^{a^t}_q \bp{v^u_q - x^t_q}.$$ We split into two cases. If $v^u_q < x^t_q$ then the term for outcome $q$ in $U(u;\cC'^t)$ is negative while the term does not exist in $U(u;\cC^t)$. If $v^u_q \ge x^t_q$, the term corresponding to outcome $q$ in $U(u;\cC'^t)$ is either equal to the term in $U(u;\cC^t)$, which is nonnegative, if $q\in S^t$, or does not exist if $q\notin S^t$. In either case, the term for outcome $q$ contributes less to the sum in $U(u;\cC'^t)$ than $U(u;\cC^t)$, so $U(u;\cC'^t) \le U(u; \cC^t)$ as desired. 
\end{proof}

\subsubsection{Profit-maximizing upfront prices for two types} \label{sec:two-type-formula}

Consider a problem instance with two types and a single action. Suppose the two types are equally likely: $\mu_1 = \mu_2 = \frac{1}{2}$. The profit-maximizing menu consists of two contracts, with the single action $a$ used in both of them. Assume without loss of generality that type 2 is a highest-revenue type, so by \cref{highest-type-no-usage-prices} we can assume $x^2_q = 0$ for all $q$. Hence the menu search space consists of choosing $x^1_q\in \bc{0,\infty}$ for all $q$ as well as setting the upfront prices $w^1, w^2$. In the following lemma we derive explicit expressions for the optimal upfront prices fixing the usage prices $x^1$.

\begin{lemma} \label{upfront-price-formula-two-types}
Suppose $\mu_1 = \mu_2 = \frac{1}{2}$. Let type 2 be a highest-revenue type. In a menu which maximizes profit given a fixed usage price vector $x^1$, denote by $S \coloneq \bc{q:x^1_q = 0}$ the set of outcomes that type 1 accepts. Then
    \begin{align*}
        w^1 &= \sum_{q\in S} p^a_q v^1_q \\
        w^2 &= \min \bc{\sum_{q\notin S} p^a_q v^2_q + \sum_{q\in S} p^a_q v^1_q, \sum_q p^a_q v^2_q}.
    \end{align*}
\end{lemma}

\begin{proof}
    Note that increasing both upfront prices $w^1$ and $w^2$ at the same rate increases profit while preserving IC since each buyers' utility for each contract decreases at the same rate. This process stops when one buyer's surplus is 0. Hence any profit-maximizing menu must have no buyer surplus for at least one type. To finish the proof of the lemma, we split into two cases.

    \begin{enumerate}
        \item[] \tbf{Case 1.} Type 1 has no buyer surplus, so $$w^1 = \sum_{q\in S} p^a_q v^a_q.$$ Since type $2$ is a highest-revenue type, by \cref{highest-type-no-usage-prices}, type 2's usage prices are 0 in a profit-maximizing menu. The type 2 IC constraint is $$\sum_q p^a_q v^2_q - w^2 \ge \sum_{q\in S} p^a_q v^2_q - w^1 \iff w^2 \le w^1 + \sum_{q\notin S} p^a_q v^2_q$$ and the type 2 IR constraint is $$w^2 \le \sum_q p^a_q v^2_q.$$ Since the menu is assume to be profit-maximizing, one of these two conditions must bind, so $$w^2 = \min \bc{\sum_{q\notin S} p^a_q v^2_q + \sum_{q\in S} p^a_q v^1_q, \sum_q p^a_q v^2},$$ precisely what the lemma states.

        % \item[\tbf{Case 2.}] Both types have no buyer surplus. We treat this case in the same as \tbf{Case 1}, labeling the lower type as type 1.

        \item[] \tbf{Case 2.} Type 1 has positive buyer surplus, so the type with no buyer surplus is the higher type 2. By \cref{highest-type-no-usage-prices}, note that type 2's usage prices are 0 in a profit-maximizing menu. We can assume that the strict inequality $w^2 > w^1$ is true, otherwise if $w^2 = w^1$ we could have treated the type with no buyer surplus as the lower type 1 in which case \tbf{Case 1} applies.

        Since the menu is assumed to be profit-maximizing, increasing $w^1$ slightly to increase profit cannot be possible. Note that type 2 IC is not violated when $w^1$ increases since $\cC^1$ becomes \emph{less} attractive to type 2. Since type 1 has positive buyer surplus, type 1 IR is not violated by increasing $w^1$ slightly, so type 1 IC must bind, meaning type 1 prefers contract $\cC^2$ as much as $\cC^1$. However, if we now replace $\cC^1$ with $\cC^2$, type 1 revenue increases from $w^1$ to $w^2$, contradicting the assumption the menu is profit-maximizing. 
    \end{enumerate} 
\end{proof}

Recalling that we assumed $\mu^1 = \mu^2 = \fr12$, for a given subset $S$ of outcomes the seller revenue is

\begin{equation} \label{revenue-formula-two-types}
\fr12 \bp{w^1 + w^2} = \frac{1}{2} \left( \sum_{q \in S} p^a_q v^1_q + \min \bc{\sum_{q\notin S} p^a_q v^2_q + \sum_{q\in S} p^a_q v^1_q, \sum_q p^a_q v^2_q} \right),
\end{equation}

From \cref{revenue-formula-two-types}, we observe the following:

\begin{claim} \label{q-in-s}
     For outcomes $q$ such that $v^1_q \ge v^2_q$, seller profit is larger when $q \in S$ than when $q\notin S$.
\end{claim}

\begin{proof}
    Setting $x^1_q = 0$ for any $q$ satisfying $v^1_q \ge v^2_q$ contributes $p^a_q v^1_q$ to the sum $\sum_{q \in S} p_q^a v_q^1$ in \cref{revenue-formula-two-types} and $p^a_q v_q^1$ to the sum $$\sum_{q\notin S} p^a_q v^2_q + \sum_{q\in S} p^a_q v^1_q$$ in the minimum in \cref{revenue-formula-two-types}. On the other hand, setting $x^1_q = \infty$ contributes 0 and $p^a_q v^2_q$ to these respective sums. Since $v^1_q \ge v^2_q$, both contributions are weakly greater when $x^1_q = 0$, so it is optimal to set $x^1_q = 0$ implying $q\in S$. 
\end{proof}

\subsubsection{Reducing \tsf{Partition} to service provider problem for two types}

\begin{proof}[Proof of \cref{partition-revenue}]

First we claim that for both the \emph{if} and \emph{only if} directions we can assume that type 2 is a highest type. If type 1, whose revenue is at most $2M$, is a highest type, then the maximum seller profit is $2M < \fr{9M}{4}$. Hence if a menu's expected profit is $\fr{9M}{4}$ then type 1 cannot be a highest type. Also, if there exists a subset of $\bc{n_1,n_2,\lds,n_k}$ that sums to $\fr{M}{2}$ then we will show that there exists an menu in which type 2 is a highest type such that the seller profit is $\fr{9M}{4}$. 

\newcommand{\Szero}{S\sm \bc{0}}

As before, let $S$ denote the outcomes that type 1 accepts, which includes outcome 0, and define $M_S = \sum_{q\in \Szero} n_q$. By \cref{upfront-price-formula-two-types}, the profit-maximizing upfront prices are
\begin{align*}
    w^1 &= \sum_{q\in S} p^a_q v^1_q = M + M_S \\
    w^2 &=  \min\bc{\sum_{q\notin S} p_q v^2_q + w^1, \sum_q p_q v^2_q}
    = \min\bc{4M - 2M_S, 3M}.
\end{align*}
If $M_S > \fr{M}{2}$ then seller profit is $$ \frac12 w^1 + \frac12 w^2 = \fr12 \bp{(M + M_S) + (4M - 2M_S)} = \fr{5M - M_S}{2} < \fr{9M}{4},$$ and if $M_S < \fr{M}{2}$ then the seller profit is $$\fr12\bp{(M + M_S) + (3M)} = \fr{4M + M_S}{2} < \fr{9M}{4}.$$ Finally, if $M_S = \fr{M}{2}$ then seller profit is exactly $\fr{9M}{4}$ as $$w^1 = \fr{3M}{2} \le 3M = w^2.$$ We conclude that there is a subset $S$ with $M_S = \fr{M}{2}$ if and only if the maximum seller profit is $\fr{9M}{4}$. 
\end{proof}

\subsubsection{Continuity of the indirect profit objective}

The main technical ingredient in the proof of \cref{value-proximity} is the following result:
    
    % \begin{claim}
    %     The objective function $\Pi_{\tt{indirect}}(s)$ is 1-Lipschitz in the state $s$ equipped with the $L^1$-norm.
    % \end{claim}

    \begin{lemma} \label{differ}
         For two states $s$ and $s'$ that differ in only one component $i = (t, t')$ by a constant $\eps>0$, we have $\Pi_{\tt{indirect}}(s') \ge \Pi_{\tt{indirect}}(s) - \eps$.

         % Furthermore, if $s'_i \ge s_i$ then $\Pi_{\tt{indirect}}(s') \ge \Pi_{\tt{indirect}}(s)$.\kicomment{where is the proof of the last statement?}
    \end{lemma}
    
    \begin{proof} Fix $\eps > 0$. Let $e_{t, t'} \in S$ denote the unit vector along the $(t, t')$-dimension.  In the following, we let $\mathbf{w}$ denote the optimal upfront prices in the definition of $\Pi_{\tt{indirect}}(s)$. Recall that $u_{\mathbf{w}}(s,t)$ denotes the optimal contract choice of a buyer of type $t$ at state $s$. We split our analysis into four cases.
        
\begin{enumerate}
    \item[] \textbf{Case 1.} \emph{Suppose $s' = s + \eps \cd e_{t, t'}$ for $t' = u_{\mathbf{w}}(s, t)$.} In this case, compared to the state $s$, the type $t$ buyer's value for the contract $t'$ has increased under the state $s'$. Since $t'$ was already the most preferred contract for the buyer under state $s$ and upfront prices $\mathbf{w}$, it is still remains so under the state $s'$ and the same upfront prices. Thus, under $\mathbf{w}$, the seller receives the same revenue under $s'$ and $s$. From this, it follows that $\Pi_{\tt{indirect}}(s') \geq \Pi_{\tt{indirect}}(s)$. 
    
    \item[] \textbf{Case 2.} \emph{Suppose $s' = s - \eps \cd e_{t, t'}$ for $t' \neq u_{\mathbf{w}}(s, t)$.} In this case, compared to the state $s$, the type $t$ buyer's value for the contract $t'$ has decreased under $s'$. Same as in the previous case, this implies that the buyer's most preferred contract remains the same in the two states under the same upfront prices, and hence once again, $\Pi_{\tt{indirect}}(s') \geq \Pi_{\tt{indirect}}(s)$.

    \item[] \textbf{Case 3.} \emph{Suppose $s' = s - \eps \cd e_{t, t'}$ for $t' = u_{\mathbf{w}}(s, t)$.} In this case, compared to the state $s$, the type $t$ buyer's value for the contract $t'$ has decreased under $s'$. Because of this, if the upfront prices $\mathbf{w}$ are unchanged, it is not guaranteed that the buyer continues to prefer contract $t'$ under $s'$. To ensure that the buyer's contract choice remains the same, one may try decreasing the upfront price $w^{t'}$ by $\eps$ (so that the buyer's utility for $t'$ remains the same). However, this reduction in the upfront price of contract $t'$ may change the optimal contract choice of other buyer types, potentially resulting in large drop in the seller's profit. 
    
    To avoid this issue, we choose carefully a set of contracts whose upfront prices are decreased by $\eps$.
    Let $D$ denote the set of all contracts $\tau \in [T]$ such that the seller's profit from contract $\tau$ is greater than that from contract $t' = u_{\mathbf{w}}(s,t)$:  
     $$D \coloneq \left\{ \tau \in [T] : w^{\tau} - c\bp{a^{\tau}} \geq  w^{t'} - c\bp{a^{t'}} \right\}.$$
    Note that $t' \in D$. Consider upfront prices $\mathbf{\bar{w}}$ defined as follows:
    \begin{align*}
        \bar{w}^\tau = \begin{cases}  \max(0, w^\tau - \eps)  & \teif \tau \in D \\
        w^\tau & \text{otherwise.}
        \end{cases}
    \end{align*}
    Consider the buyer of type $\tau \neq t$. If $u_{\mathbf{w}}(s, \tau) \in D$ and $w^\tau > \eps$ then we claim that $u_{\mathbf{\bar{w}}}(s', \tau) = u_{\mathbf{w}}(s, \tau)$, namely the type $\tau$ buyer prefers the same contract in state $s'$ under upfront prices $\mathbf{\bar{w}}$ as the one in state $s$ under $\mathbf{w}$. This is because, under $s'$ the contract values for type $\tau$ buyer remains the same, the upfront price of her most preferred contract reduces by $\eps$ and the upfront prices for other contracts reduces by at most $\eps$. We conclude that changing the upfront price to $\bar{w}^\tau$ reduces the service provider's profit from type $\tau \neq t$ by at most $\mu^\tau\cd \eps$, as either $w^\tau \le \eps$ and the profit was at most $\mu^\tau \cd \eps$ to begin with, or the type $\tau$'s contract choice remains the same and the upfront price reduces by $\eps$.

    On the other hand, if $u_{\mathbf{w}}(s, \tau) \notin D$, since only the upfront prices of contracts in $D$ are reduced, we conclude that either $u_{\mathbf{\bar{w}}}(s', \tau) = u_{\mathbf{w}}(s, \tau)$ or $u_{\mathbf{\bar{w}}}(s', \tau) \in D$. In the former case, the service provider's profit from type $\tau$ buyer remains the same. In the latter case, let $\bar{\tau} = u_{\mathbf{\bar{w}}}(s', \tau) \in D$. Then, the service provider's profit from type $\tau$ buyer in state $s'$ and upfront prices $\mathbf{\bar{w}}$ is $\bar{w}^{\bar{\tau}} - c(a^{\bar{\tau}}) = w^{\bar{\tau}}  - \eps - c(a^{\bar{t}}) \geq w^{t'} - c(a^{t'}) - \eps > w^{u_{\mathbf{w}}(s, \tau)} - c(a^{u_{\mathbf{w}}(s, \tau)}) - \eps$. Here, the first inequality follows from the definition of $D$ and from the fact that $\bar{\tau} \in D$, whereas the second inequality follows because $u_{\mathbf{w}}(s, \tau) \notin D$. Again, the service provider's profit from type $\tau \neq  t$ decreases by at most $\mu^\tau \cd \eps$.

    Next, consider the buyer of type $t$. Let $\bar{t} = u_{\mathbf{\bar{w}}}(s',t)$ denote the contract preferred by the buyer in state $s'$ under upfront prices $\mathbf{\bar{w}}$. We claim that $\bar{t} \in D$. To see why, observe first that for any contract not in $D$, the buyer's utility for the contract in state $s'$ and upfront prices $\mathbf{\bar{w}}$ is the same as that in state $s$ and upfront prices $\mathbf{w}$, since neither the contract value nor the upfront price has changed. Furthermore, since $t' = u_{\mathbf{w}}(s, t) \in D$, we have $s'_{t, t'} - \bar{w}^{t'}  = s_{t,t'} - w^{t'} $, and hence the buyer's utility for contract $t'$ also remains the same, since both contract value and upfront price have reduced by $\eps$. Finally, for other contracts in $D$, the buyer's utility is lower by $\eps$ as only the upfront price has reduced. Thus, we conclude the buyer will now prefer a contract $\bar{t}$ in $D$.  Thus, the service provider's profit from type $t$ buyer in state $s'$ and upfront prices $\mathbf{\bar{w}}$ is $\bar{w}^{\bar{t}} - c(a^{\bar{t}}) = w^{\bar{t}}  - \eps - c(a^{\bar{t}}) \geq w^{t'} - c(a^{t'}) - \eps$, where the  inequality follows from the definition of $D$. Thus, the service provider's profit from type $t$ buyer also reduces by at most $\mu(t) \eps$.
     
    Putting these cases together, we conclude that overall profit of the service provider in state $s'$ and under upfront prices $\mathbf{\bar{w}}$ is at most $\sum_{\tau} \mu^\tau \cd \eps = \eps$ less than that in state $s$ and under upfront prices $\mathbf{w}$. Thus, we obtain $\Pi_{\tt{indirect}}(s') \geq \Pi_{\tt{indirect}}(s) - \eps$.

    \item[] \textbf{Case 4.} \emph{Suppose $s' = s + \eps \cd e_{t, t'}$ for $t' \neq u_{\mathbf{w}}(s, t)$.}  This case is similar to the preceding case. Let $\hat{t} = u_{\mathbf{w}}(s,t)$, and note that $u_{\mathbf{w}}(s',t) \in \{\hat{t}, t'\}$. 
    
    Define $\hat{D}$ as the set of contracts that yield higher profit for the service provider than contract $\hat{t}$: $$\hat{D}  \coloneq \left\{ \tau \in [T] : w^{\tau} - c\bp{a^{\tau}} \geq  w^{\hat{t}} - c\bp{a^{\hat{t}}} \right\}.$$ Note that $\hat{t} \in \hat{D}$. If $t' \in \hat{D}$, then $u_{\mathbf{w}}(s', t) \in \hat{D}$, implying that the service provider's profit from type $t$ buyer has increased, and hence $\Pi_{\tt{indirect}}(s') \geq \Pi_{\tt{indirect}}(s)$. On the other hand, if $t' \notin \hat{D}$, then consider the upfront prices $\mathbf{\bar{w}}$ where $\bar{w}^{\tau} = w^\tau - \eps$ for $\tau \in \hat{D}$ and $\bar{w}^{\tau} = w^\tau$ otherwise. Using similar arguments as in the preceding case, it follows that in state $s'$ and under upfront prices $\mathbf{\bar{w}}$, each buyer types either prefer the same contract as in state $s$ and under upfront prices $\mathbf{w}$, or prefers a contract in $\hat{D}$. From this, it follows that the service provider's overall profit reduces by at most $\eps$, and hence $\Pi_{\tt{indirect}}(s') \geq \Pi_{\tt{indirect}}(s) - \eps$. 
\end{enumerate}
\end{proof}

We now use \cref{differ} to prove \cref{value-proximity}.

\begin{proof}[Proof of \cref{value-proximity}]
    Recall that we need to show that there exists an integer $G\ge 0$ such that if states $s$ and $s'$ are $r$-close for any $1 < r < 1 + \fr{1}{G+1}$ then $\Pi_{\tt{indirect}}(s') \ge r^{-G}\cd \Pi_{\tt{indirect}}(s)$. If $s$ and $s'$ are $r$-close then $s$ and $s'$ differ in all components by at most $(r-1) \cd \max_i s_i$. Applying \cref{differ} componentwise yields that $$\Pi_{\tt{indirect}}(s') \ge \Pi_{\tt{indirect}}(s) - T^2 \cd (r-1) \cd \max_i s_i$$ since $s$ has $T^2$ components. Note that $\Pi_{\tt{indirect}}(s)$ includes the maximum action cost $c$ for a profit-maximizing mapping of contracts to actions and hence is at least the seller revenue, which in turn is at least $\min_{t\in [T]} \mu^t \cd \max_i s_i$ since $s_i$ is the value of a contract for some buyer. To finish the proof, it suffices to choose $G$ so that $1-r^{-G} \ge \fr{1}{\min_{t\in [T]} \mu^t} \cd T^2\cd (r-1)$ so that using $\Pi_{\tt{indirect}}(s) \ge \min_{t\in [T]} \mu^t \cd \max_i s_i$ we have
    \begin{align*}
        \Pi_{\tt{indirect}}(s')
        &\ge \Pi_{\tt{indirect}}(s) - T^2 \cd (r-1) \cd \max_i s_i
        \\&\ge \bp{1 - \fr{1}{\min_{t\in [T]} \mu^t} \cd T^2\cd (r-1)} \cd \Pi_{\tt{indirect}}(s)
        \\&\ge r^{-G} \cd \Pi_{\tt{indirect}}(s)
    \end{align*}
    as desired. We compute
    \begin{align}
        & 1-r^{-G} \ge \fr{1}{\min_{t\in [T]} \mu^t} \cd T^2\cd (r-1) \notag
        \\&\iff  r^G - 1 \ge r^G \cd \fr{1}{\min_{t\in [T]} \mu^t} \cd T^2 \cd (r-1) \notag
        \\&\iff \sum_{j=1}^{G} r^{-j} \ge \fr{T^2}{\min_{t\in [T]} \mu^t}. \label{geometric-series} 
    \end{align}
    Finally, we choose $G = \fr{2T^2}{\min_{t\in [T]} \mu^t}$ so that for all $r>1$ with
    \begin{equation} \label{r-G-eq}
        r-1 \le \fr{1}{G+1} \imp \fr{G(G+1)}{2} \cd (r-1) \le \fr{G}{2} =  \fr{T^2}{\min_{t\in [T]} \mu^t},
    \end{equation}
    \cref{geometric-series} is true by
    \begin{align}
        \sum_{j=1}^{G} r^{-j} 
        &\ge \sum_{j=1}^{G} \bp{1 - j(r-1)} \label{bernou}
        \\&= G - \fr{G(G+1)}{2}\cd (r-1) \notag
        \\&\ge \fr{T^2}{\min_{t\in [T]} \mu^t}, \label{r-G-eq-app}
    \end{align}
    where \cref{bernou} follows from Bernoulli's Inequality $(1 + (r-1))^{-j} \ge 1 - j(r-1)$ and \cref{r-G-eq-app} follows from \cref{r-G-eq}. 
\end{proof}

\subsubsection{Seller profit for indirect menu is efficiently computable} 

\begin{proof}[Proof of \cref{value-efficient}.]
    The key idea to enumerate over all possible functions $u_\mathbf{w}(s, \cdot) :[T]\to \bc{0}\cup [T]$, of which there are $(T+1)^T$, a constant because we assume the number of types $T$ is a constant. Fixing the function $u_\mathbf{w}(s, \cdot) = u$, we claim that the service provider's optimization problem can be solved by linear programming. Indeed, fixing $u$ the profit is linear in the upfront prices $\bp{w^t}_{t\in [T]}$ and furthermore the $\argmax$ constraints added by fixing $u$ are precisely constraints analogous to the IC and IR constraints:
    \begin{align*}
        s_{t, u(t)} - w^{ u(t)} &\geq s_{t, t'} - w^{t'}, \quad \fl t,t' && \te{(IC_u)} \\
        s_{t, u(t)} - w^{ u(t)} &\geq 0, \quad \fl t  && \te{(IR_u)} 
    \end{align*}
Note that these constraints are linear in the upfront prices. We conclude that $\Pi_{\tt{indirect}}(s)$ can be computed by solving $(T+1)^T$ linear programs, one for each assignment mapping $u$ from types to contracts, and taking the maximum objective value over all feasible linear programs. 
\end{proof}

\subsection{The single-parameter setting} \label{a:no-single-contract-example} \label{a:higher-types-higher-values} \label{a:revenue-maximizing-upfront-prices} 

\subsubsection{Single contract cannot maximize seller profit when costs are heterogeneous} 

\begin{example}
    Consider the following single-parameter problem instance:
    \begin{itemize}
        \item Let $T=2$, $\mu = \bp{\fr23, \fr13}$, and $Q = \bc{1,2}$. There are two actions $a^1, a^2$ with costs $c(a^1) = 0, c(a^2) = \fr32$ such that action $a^i$ deterministically maps to outcome $i$.

        \item Valuations are given by $\mathbf{v}^1 = (1,2)$ and $\mathbf{v}^2 = (2,4)$.
    \end{itemize}
We can verify that the direct menu consisting of contracts $\cC^1 = (a^1, 1, \mbf{0}), \cC^2 = (a^2, 3, \mbf{0})$ is IC and yields seller profit $$\fr23 \cd \bp{w^1 - c(a^1)} + \fr13 \cd \bp{w^2 - c(a^2)} = \fr23 \cd (1-0) + \fr13 \cd \bp{3 - \fr32} = \fr76.$$ On the other hand, we claim that no menu consisting of a single contract $\cC = (a, w, \mbf{x})$ with $x_q \in \{0, \infty\}$ can achieve revenue 1. We split into two cases:
\begin{enumerate}
    \item[] \tbf{Case 1.} If $a = a^1$, then since $a^1$ deterministically maps to outcome 1, the usage price $x_2$ does not factor into the profit. To yield profit at all, we must have $x_1 = 0$. If $w \le 1$ then both types prefer $\cC$ to the trivial contract $\cC^0$, for a profit of at most 1. If $w\in (1, 2]$ then only type 2 prefers $\cC$ to $\cC^0$, for a profit of at most $\fr13 \cd 2 = \fr23$.

    \item[] \tbf{Case 2.} If $a = a^2$, then since $a^2$ deterministically maps to outcome 2, the usage price $x_1$ does not factor into the profit. To yield profit at all, we must have $x_2 = 0$. If $w \le 2$ then both types prefer $\cC$ to the trivial contract $\cC^0$, for a profit of at most $2 - \fr32 = \fr12$. If $w\in (2, 4]$ then only type 2 prefers $\cC$ to $\cC^0$, for a profit of at most $\fr13 \cd \bp{4 - \fr32} = \fr56$.
\end{enumerate}
We conclude that the maximum seller profit from a single-contract menu is at most 1 and hence a single contract cannot be profit-maximizing.

\end{example}

\subsubsection{Monotonicity of upfront prices and (scaled) contract values}

\begin{proof}[Proof of \cref{higher-types-higher-values}]
    Since $U(t;u) = \frac{\alpha^t}{\alpha^u} V^u - w^u$, the IC constraint requiring that a type $t$ buyer prefer contract $t$ over contract $u <  t$ yields
    \begin{align*}
         & V^t - w^t \ge \fr{\alpha^t}{\alpha^u} \cd V^u - w^u  \iff  w^t  - w^u \le \alpha^t \left( \fr{V^t}{\alpha^t} - \fr{V^u}{\alpha^u} \right).
    \end{align*}
    Similarly, the IC constraint requiring that a type $u$ buyer prefer contract $u$ over contract $t$ yields $w^t  - w^u \geq \alpha^u \left( \fr{V^t}{\alpha^t} - \fr{V^u}{\alpha^u} \right)$,
    and hence
    \begin{equation} \label{align:monotonic-payment}
        \alpha^u \left( \fr{V^t}{\alpha^t} - \fr{V^u}{\alpha^u} \right) \le w^t - w^u \le \alpha^t \left( \fr{V^t}{\alpha^t} - \fr{V^u}{\alpha^u} \right).
    \end{equation}
    Thus, we have
    \begin{align*}
        \bp{\alpha^t - \alpha^u} \left(\fr{V^t}{\alpha^t} - \fr{V^u}{\alpha^u}\right) \geq 0.
    \end{align*}
    Since $t > u$, we have $\alpha^t > \alpha^u$, and we conclude $\fr{V^t}{\alpha^t} \geq \fr{V^u}{\alpha^u}$ and also $w^t \geq w^u$ from \cref{align:monotonic-payment}.
\end{proof}

\subsubsection{Revenue-maximizing upfront prices and seller revenue for given contract values} 

\begin{proof}[Proof of \cref{revenue-maximizing-upfront-prices}]

Recall from \cref{align:monotonic-payment} that the IC constraints reduce to 
\begin{align*}
     \alpha^u \left( \fr{V^t}{\alpha^t} - \fr{V^u}{\alpha^u} \right) \le w^t - w^u \le \alpha^t \left( \fr{V^t}{\alpha^t} - \fr{V^u}{\alpha^u} \right) \quad \text{for all $t > u$.}
\end{align*}
Similarly, the IR conditions imply $w^t \leq V^t$ for all $t$. From \cref{higher-types-higher-values}, we have that $w^t \geq w^u$ and $\frac{V^t}{\alpha^t} \geq \frac{V^u}{\alpha^u}$ for $t > u$. Since $\alpha^t  > \alpha^u$, this in turn implies $V^t \geq V^u$ for $t  > u$. 

First, observe that, under the IC conditions, if the IR condition $w^1 \leq V^1$ holds, then it holds for all $t$. To see this, the IC condition for $t > 1$ implies
\begin{align*}
    w^t \leq w^1 + \alpha^t \left( \frac{V^t}{\alpha^t} - \frac{V^1}{\alpha^1} \right) \leq V^1  + \alpha^t \left( \frac{V^t}{\alpha^t} - \frac{V^1}{\alpha^1} \right) = V^t - \left(\frac{\alpha^t}{\alpha^1} - 1\right) V^1 \leq V^t.
\end{align*}
Thus, it suffices to impose the IC conditions along with $w^1 \leq V^1$ to ensure IR.

Next, suppose the IC condition holds for all $t$ and $u=t-1$. Then, for general $t> u$, we have
\begin{align*}
    w^t - w^u = \sum_{i=u}^{t-1} (w^{i+1} - w^i) \leq \sum_{i=u}^{t-1} \alpha^{i+1} \left( \frac{V^{i+1}}{\alpha^{i+1}} - \frac{V^i}{\alpha^i}\right) \leq \alpha^t \sum_{i=1}^{t-1} \left( \frac{V^{i+1}}{\alpha^{i+1}} - \frac{V^i}{\alpha^i}\right) = \alpha^t \left( \fr{V^t}{\alpha^t} - \fr{V^u}{\alpha^u} \right).
\end{align*}
Here, we have used the fact that $\alpha^{i+1} > \alpha^i$ in the second inequality. A similar argument shows that $w^t - w^u \geq \alpha^u \left( \fr{V^t}{\alpha^t} - \fr{V^u}{\alpha^u} \right)$. Thus, it is sufficient to impose the IC conditions for all $t, u$ with $u = t-1$. This yields the following set of inequalities:
\begin{align*}
     \alpha^{t-1} \left( \fr{V^t}{\alpha^t} - \fr{V^{t-1}}{\alpha^{t-1}} \right) \le w^t - w^{t-1} \le \alpha^t \left( \fr{V^t}{\alpha^t} - \fr{V^{t-1}}{\alpha^{t-1}} \right) \quad \text{for all $t>1$.}
\end{align*}
Along with $w^1 \leq V^1$, this implies that the seller's revenue is maximized if $w^1 = V^1$ and $w^t - w^{t-1} =\alpha^t \left( \fr{V^t}{\alpha^t} - \fr{V^{t-1}}{\alpha^{t-1}} \right)$ for all $t > 1$. Thus, the revenue-maximizing upfront prices are given by
\begin{align*}
    w^1 &= V^1\\
    w^t &= V^1 + \sum_{i=2}^{t} \alpha^i \left( \fr{V^i}{\alpha^i} - \fr{V^{i-1}}{\alpha^{i-1}} \right)  =  V^t  - \sum_{i=1}^{t-1} \left( \alpha^{i+1} - \alpha^i\right) \frac{V^i}{\alpha^i}.
\end{align*}
The expression for the seller's revenue follows after some simple algebra. 
\end{proof}

\end{document}